
\documentclass[openacc]{rsproca_new}


\newtheorem{thm}{\bf Theorem}[section]
\newtheorem{prop}{\bf Proposition}[section]


\usepackage{mathrsfs}

%

\usepackage{bm}

\usepackage{titlesec}



\usepackage{multirow}

\usepackage[titletoc,title]{appendix}





%

\newcommand{\ds}{\displaystyle}

\def\EXP{\textrm{{\large e}}}

\newcommand{\lag}{{\mathcal \L}}
\newcommand{\ol}{\overline{\lag}}
\newcommand{\lam}{{\Lambda}}
\newcommand{\olam}{\overline{\lam}}
\newcommand{\lagh}{\hat{\lag}}
\newcommand{\olh}{\hat{\ol}}

\newcommand{\olamh}{\hat{\olamh}}

\newcommand{\al}{{\bm{\alpha}}}
\newcommand{\bt}{{\bm{\beta}}}
\newcommand{\gm}{{\bm{\gamma}}}

\newcommand{\ii}{\mathsf{i}}

\newcommand{\ccy}{x_a}
\newcommand{\ccya}{z_w}
\newcommand{\ccyb}{x}
\newcommand{\ccyc}{z_n}
\newcommand{\ccz}{y_d}
\newcommand{\ccza}{z_e}
\newcommand{\cczb}{y}
\newcommand{\cczc}{z_s}
\newcommand{\ccx}{x}
\newcommand{\cca}{y_c}
\newcommand{\ccb}{x_c}
\newcommand{\ccc}{x_d}
\newcommand{\ccd}{x_b}
\newcommand{\cce}{y_b}
\newcommand{\ccf}{y_a}
\newcommand{\ccpc}{\alpha}
\newcommand{\ccpa}{\alpha_1}
\newcommand{\ccpb}{\alpha_2}
\newcommand{\ccqc}{\beta}
\newcommand{\ccqa}{\beta_1}
\newcommand{\ccqb}{\beta_2}
\newcommand{\ccra}{\gamma_1}
\newcommand{\ccrb}{\gamma_2}
\newcommand{\ccpp}{{\al}}
\newcommand{\ccqq}{{\bt}}
\newcommand{\ccrr}{{\gm}}

\renewcommand{\L}{L}
\newcommand{\LL}{L_1}
\newcommand{\M}{L_2}
\newcommand{\Lb}{L_4}
\newcommand{\Mb}{L_3}

\newcommand{\LA}{L_A}
\newcommand{\LC}{L_C}
\newcommand{\LCi}{\hat{L}_C}

\newcommand{\Lxa}{\L_{x^2}}
\newcommand{\Lxb}{\L_{x^1}}
\newcommand{\Lxc}{\L_{x^0}}
\newcommand{\Dxa}{\Delta_{x^2}}
\newcommand{\Dxb}{\Delta_{x^1}}
\newcommand{\Dxc}{\Delta_{x^0}}

\newcommand{\Af}[7]{\overline{A}(#1;#2,#3,#4,#5;#6,#7)}
\newcommand{\A}[7]{A(#1;#2,#3,#4,#5;#6,#7)}

\newcommand{\C}[7]{C(#1;#2,#3,#4,#5;#6,#7)}
\newcommand{\Cf}[7]{\overline{C}(#1;#2,#3,#4,#5;#6,#7)}

\newcommand{\Aft}[7]{\overline{a}(#1;#2,#3,#4,#5;#6,#7)}
\newcommand{\At}[7]{a(#1;#2,#3,#4,#5;#6,#7)}

\newcommand{\Ct}[7]{c(#1;#2,#3,#4,#5;#6,#7)}
\newcommand{\Cft}[7]{\overline{c}(#1;#2,#3,#4,#5;#6,#7)}

\begin{document}

\title{Discrete integrable equations on face-centered cubics: consistency and Lax pairs of corner equations}

\author{
Andrew P.~Kels}

\address{Scuola Internazionale Superiore di Studi Avanzati,\\ Via Bonomea 265, 34136 Trieste, Italy}

\subject{mathematical physics}

\keywords{multidimensional consistency, discrete integrability, Lax pairs, lattice equations}

\corres{Andrew P.~Kels\\
\email{andrew.p.kels@gmail.com}}

\begin{abstract}
A new set of discrete integrable equations, called face-centered quad equations, was recently obtained using new types of interaction-round-a-face solutions of the classical Yang-Baxter equation.  These equations satisfy a new formulation of multidimensional consistency, known as consistency-around-a-face-centered-cube (CAFCC), which requires consistency of an overdetermined system of fourteen five-point equations on the face-centered cubic unit cell.  In this paper a new formulation of CAFCC is introduced where so-called type-C equations are centered at faces of the face-centered cubic unit cell, whereas previously they were only centered at corners.  This allows type-C equations to be regarded as independent multidimensionally consistent integrable systems on higher-dimensional lattices and is used to establish their Lax pairs.
\end{abstract}


\begin{fmtext}
\section{Introduction}

The property of multidimensional consistency has been at the forefront of studies of integrability of partial difference equations \cite{hietarinta_joshi_nijhoff_2016} for the past two decades.  For partial difference equations defined on faces of a square lattice, commonly known as quad equations, multidimensional consistency arises as a consequence of the consistency of an overdetermined system of six equations on the faces of a cube, a property commonly known as consistency-around-a-cube (CAC) \cite{DoliwaSantini,nijhoffwalker,BobSurQuadGraphs,ABS}.  Furthermore, by comparing two different evolutions on the cube one may derive Lax pairs for CAC equations \cite{NijhoffQ4Lax,BobSurQuadGraphs,BHQKLax,HietarintaNEWCAC}, through which the equations themselves are essentially reinterpreted as their own Lax matrices.

The author has recently introduced the concept of face-centered quad equations \cite{Kels:2020zjn} as a new type of multidimensionally consistent equation on the square  \phantom{equations} 

\end{fmtext}


\maketitle

\noindent
lattice.  The analogue of CAC for face-centered quad equations is the property of consistency-around-a-face-centered-cube (CAFCC), which requires consistency of an overdetermined system of fourteen equations on the face-centered cube (face-centered cubic unit cell).  Similarly to CAC equations, a comparison of two different directions of evolution on the face-centered cube was shown to lead to Lax pairs \cite{KelsLax}, through which the equations themselves are reinterpreted as their own Lax matrices.  Besides multidimensional consistency and Lax pairs, the CAFCC equations also possess vanishing algebraic entropy \cite{GubbiottiKels} (as do CAC equations), which is another important property associated to integrability of lattice equations \cite{BellonViallet1999,Tremblay2001,Viallet2006}.  

In the original formulation of CAFCC \cite{Kels:2020zjn} there arose three types of equations which were denoted as one of type-A, -B, or -C.  The type-C equations were centered at corners of the face-centered cube and provided an intermediate evolution equation between type-A and -B equations that were centered at faces of the face-centered cube.  This means that CAFCC in its original formulation can only imply the consistency of systems of either type-A or type-B equations in higher-dimensional lattices.  Furthermore, the method that was introduced for deriving Lax pairs \cite{KelsLax} only applies to CAFCC equations appearing on faces of the face-centered cube, and thus Lax pairs have only been obtained for type-A and type-B equations, but not for type-C equations.

The purpose of this paper is to establish a new formulation of CAFCC where the type-C equations are centered at faces of the face-centered cube. The motivation behind the search for this new formulation of CAFCC arises from the recent discovery of the vanishing algebraic entropy for type-C equations in appropriate arrangements in the square lattice \cite{GubbiottiKels}, suggesting that they are integrable systems which should also have consistent extensions into higher-dimensional lattices, similarly to type-A and -B equations. The new formulation of CAFCC of this paper gives the multidimensional consistency of systems of type-C equations in combination with type-A equations, and furthermore allows for the derivation of their Lax pairs.

This paper is organised as follows.  In \S\ref{sec:FCQECAFCC} an overview of the concept of face-centered quad equations is given, and the main result of this section is the new formulation of CAFCC involving only type-A and type-C equations which is given in \S\ref{sec:FCQECAFCC}\ref{sec:CAFCC}.  In \S\ref{sec:Lax} it is shown how the property of CAFCC leads to Lax pairs of type-C equations, and explicit examples are presented.  The type-A and type-C equations obtained from \cite{Kels:2020zjn} are given in Appendix \ref{app:equations}, and in Appendix \ref{sec:YBE} a new form of the classical Yang-Baxter equation is given that is related to the property of CAFCC of \S\ref{sec:FCQECAFCC}\ref{sec:CAFCC}.

\section{Face-centered quad equations and CAFCC}\label{sec:FCQECAFCC}

The author has recently introduced the concept of face-centered quad equations and their consistency as CAFCC \cite{Kels:2020zjn}. A face-centered quad equation may be written as 
\begin{align}
\A{x}{x_a}{x_b}{x_c}{x_d}{\ccpp}{\ccqq}=0,
\end{align}
where $A$ is a multivariate polynomial of five variables $x,x_a,x_b,x_c,x_d$, with linear dependence on each of the four variables $x_a,x_b,x_c,x_d$, but no restriction on the degree of $x$.  This equation is associated to a square as shown on the left of Figure \ref{fig:3fig4quad}, where the variables $x_a,x_b,x_c,x_d$ are at corners, and the variable $x$ is centered at the face.  The multilinearity in the four corner variables $x_a,x_b,x_c,x_d$ is a natural requirement to have a well-defined evolution in the square lattice.  The parameters $\ccpp$ and $\ccqq$ each have two components
\begin{align}
\label{pardefs}
\ccpp=(\ccpa,\ccpb),\qquad\ccqq=(\ccqa,\ccqb),
\end{align}
and are associated to horizontal and vertical lines as indicated in Figure \ref{fig:3fig4quad}.

\begin{figure}[h!]
\centering
\includegraphics{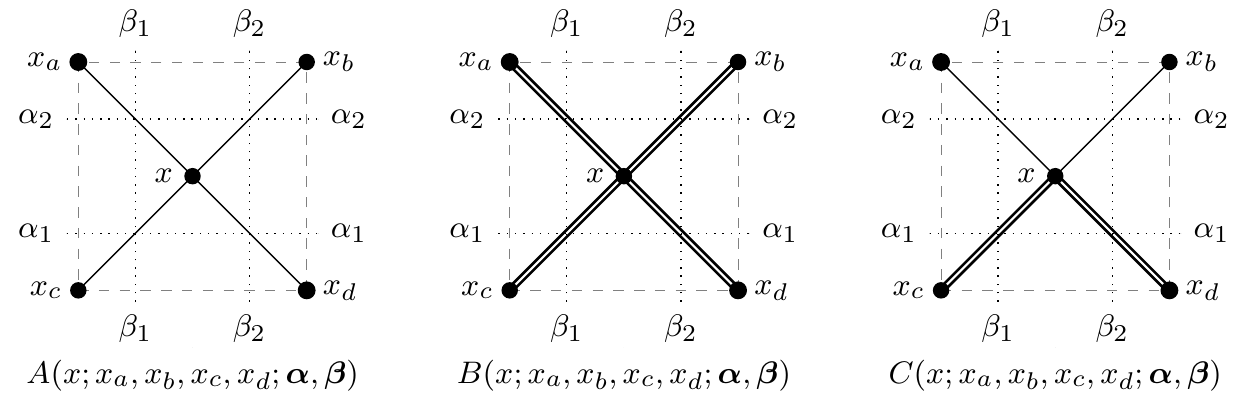}
\caption{From left to right, a type-A, type-B, and type-C face-centered quad equation.}
\label{fig:3fig4quad}
\end{figure}

A general polynomial form of a face-centered quad equation is given by
\begin{multline}\label{afflin}
\A{\ccx}{x_a}{x_b}{x_c}{x_d}{\ccpp}{\ccqq} 
=\kappa_1x_ax_bx_cx_d+\kappa_2x_ax_bx_c+\kappa_3x_ax_bx_d+\kappa_4x_ax_cx_d+\kappa_5x_bx_cx_d
\\
\begin{aligned}
+\kappa_6x_ax_b+\kappa_7x_ax_c+\kappa_8x_ax_d+\kappa_9x_bx_c+\kappa_{10}x_bx_d+\kappa_{11}x_cx_d \\
+\kappa_{12}x_a+\kappa_{13}x_b+\kappa_{14}x_c+\kappa_{15}x_d+\kappa_{16}=0,
\end{aligned}
\end{multline}
where the coefficients $\kappa_i=\kappa_i(x,\al,\bt)$ ($i=1,\ldots,16$) depend on the face variable $\ccx$ and the four components $\ccpa,\ccpb,\ccqa,\ccqb$ of the parameters $\ccpp,\ccqq$.  The multilinear expression \eqref{afflin} resembles the general expression for regular quad equations, except in the latter case the coefficients would depend only on two one-component parameters and no variables.

Originally, three types of face-centered quad equations were distinguished as type-A, -B, or -C.  The type-A equations satisfy the following three symmetries
\begin{align}\label{refsym}
    \A{x}{x_a}{x_b}{x_c}{x_d}{\al}{\bt}=-\A{x}{x_b}{x_a}{x_d}{x_c}{\al}{\hat{\bt}},
\\
\label{symAB}
   \A{x}{x_a}{x_b}{x_c}{x_d}{\al}{\bt}=-\A{x}{x_c}{x_d}{x_a}{x_b}{\hat{\al}}{\bt},
\\
\label{symA}
    \A{x}{x_a}{x_b}{x_c}{x_d}{\al}{\bt}=-\A{x}{x_d}{x_b}{x_c}{x_a}{\bt}{\al},
\end{align}
where $\hat{\al}$ and $\hat{\bt}$ represent $\al$ and $\bt$ with the components exchanged
\begin{align}
    \hat{\al}=(\alpha_2,\alpha_1),\qquad\hat{\bt}=(\beta_2,\beta_1).
\end{align}
Type-B equations only satisfy \eqref{refsym} and \eqref{symAB}, while type-C equations only satisfy \eqref{refsym}.  The symmetries \eqref{refsym}--\eqref{symA} are natural analogues for face-centered quad equations of the square symmetries that are satisfied by regular quad equations.  

For the purposes of this paper, it will be useful to distinguish the three different types of face-centered quad equations by using different types of edges as is shown in Figure \ref{fig:3fig4quad}.  For face-centered quad equations that were obtained in \cite{Kels:2020zjn}, this graphical representation is related to the existence of ``four-leg'' forms of the equations, which arose as a step in the derivation of the equations \eqref{afflin} that came from the classical Yang-Baxter equation (however, a generic face-centered quad equation \eqref{afflin} does not have this property).  Specifically, the face-centered quad equations derived in \cite{Kels:2020zjn} may be written in the following forms
\begin{align}
\label{4leg}
&\textrm{Type-A:}\;&\frac{
a(x;x_a;\alpha_2,\beta_1)a(x;x_d;\alpha_1,\beta_2)}{a(x;x_b;\alpha_2,\beta_2)a(x;x_c;\alpha_1,\beta_1)}=1, \\
%
\label{4legb}
&\textrm{Type-B:}\;&\frac{b(x;x_a;\alpha_2,\beta_1)b(x;x_d;\alpha_1,\beta_2)}{b(x;x_b;\alpha_2,\beta_2)b(x;x_c;\alpha_1,\beta_1)}=1, \\
\label{4legc}
&\textrm{Type-C:}\;&\frac{a(x;x_a;\alpha_2,\beta_1)c(x;x_d;\alpha_1,\beta_2)}{a(x;x_b;\alpha_2,\beta_2)c(x;x_c;\alpha_1,\beta_1)}=1.
\end{align}
The function $a(x,y;\ccpc,\ccqc)$ satisfies $a(x,y;\ccpc,\ccqc)a(x,y;\ccqc,\ccpc)=1$, and the three functions $a(x;y;\ccpc,\ccqc)$, $b(x;y;\ccpc,\ccqc)$, and $c(x;y;\ccpc,\ccqc)$, are each fractional linear functions of a corner variable $y$.  The function $a(x;y;\ccpc,\ccqc)$ is associated to the solid edges of both type-A and type-C equations in Figure \ref{fig:3fig4quad}, and the functions $b(x;y;\ccpc,\ccqc)$ and $c(x;y;\ccpc,\ccqc)$ are associated to the solid double edges of type-B and type-C equations respectively.  The reason that the edges of both type-A and type-B equations appear for type-C equations, is that the latter originally arose in CAFCC as an intermediate equation between the ``face'' vertices of both type-A and type-B equations.  There are type-B and type-C equations for which the functions $b(x;y;\ccpc,\ccqc)$ and $c(x;y;\ccpc,\ccqc)$ are the same, but to satisfy CAFCC \cite{Kels:2020zjn} it requires for some cases that the functions are different, which is the reason that they are denoted as different functions.  Examples of the functions  $a(x;y;\ccpc,\ccqc)$ and  $c(x;y;\ccpc,\ccqc)$ are listed in Table \ref{table-BC2} in Appendix \ref{app:equations} for type-A and type-C equations that will be used in this paper.

Starting from a four-leg form, the multilinear form \eqref{afflin} of a face-centered quad equation is typically recovered by multiplying both sides of one of \eqref{4leg}--\eqref{4legc} by its denominator, and cancelling common terms appearing on both sides.  On the other hand, a generic multilinear face-centered quad equation \eqref{afflin} does not possess a four-leg form. 
The explicit multilinear face-centered quad equations and associated four-leg forms for type-A and type-C equations that were derived in \cite{Kels:2020zjn} are given in Appendix \ref{app:equations}.  The expressions for the type-A equations in the four-leg form \eqref{4leg} are related to certain expressions for discrete Toda equations that have been known for a long time \cite{AdlerPlanarGraphs,BobSurQuadGraphs,Suris_DiscreteTimeToda}, while the type-B and -C equations in the four-leg form \eqref{4legb} and \eqref{4legc} (as well as multilinear form) appear to be new equations that arose from the discovery of CAFCC \cite{Kels:2020zjn}.

\subsection{A new formulation of CAFCC}\label{sec:CAFCC}

Type-A equations have been shown to satisfy CAFCC either by themselves or in combination with type-B and type-C equations \cite{Kels:2020zjn}.  In either case, both type-A and type-B equations only appeared centered at faces of the face-centered cube, and type-C equations only appeared centered at corners.  Here, a new formulation of CAFCC will be given that involves both type-A and type-C equations that are centered at both faces and corners.

Let two different type-A equations, and two different type-C equations, be denoted as
\begin{align}\label{typeACAFCC}
\A{x}{x_a}{x_b}{x_c}{x_d}{\al}{\bt}=0,\qquad\Af{x}{x_a}{x_b}{x_c}{x_d}{\al}{\bt}=0,
\\
\label{typeCCAFCC}
\C{x}{x_a}{x_b}{x_c}{x_d}{\al}{\bt}=0,\qquad\Cf{x}{x_a}{x_b}{x_c}{x_d}{\al}{\bt}=0.
\end{align}
For convenience, the two different type-C equations \eqref{typeCCAFCC} will be distinguished graphically by different orientations of directed edges, as shown in Figure \ref{fig:2typeC}.

\begin{figure}[h!]
\centering
\includegraphics{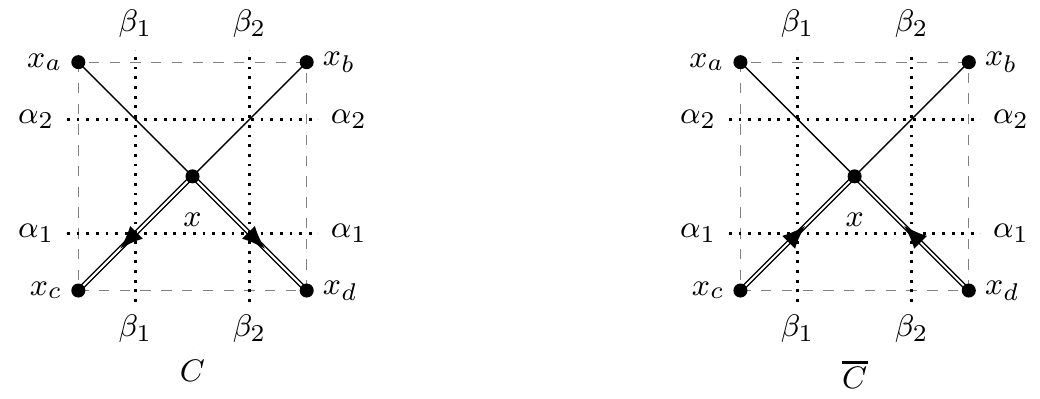}
\caption{Two different type-C equations \eqref{typeCCAFCC} distinguished by the orientations of directed edges.  The convention used in the following is that $C$ will be assigned to faces of the face-centered cube, and $\overline{C}$ will be assigned to corners (see the top diagram of Figure \ref{fig:CAFCCcube}).}
\label{fig:2typeC}
\end{figure}


Two halves of the face-centered cube are shown in the top diagram of Figure \ref{fig:CAFCCcube}. This presentation is used to more clearly show how edges and parameters are arranged, while it also reflects the form of Yang-Baxter equation that is associated to CAFCC (see Appendix \ref{sec:YBE}).  As usual, the three parameters $\al$, $\bt$, $\gm$, are associated to three orthogonal lattice directions, and the components of $\gm$ are exchanged on two faces which meet orthogonally.

There are fourteen equations on the face-centered cube shown in the top diagram of Figure \ref{fig:CAFCCcube}, made up of the six equations
\begin{align}\label{6face}
\begin{array}{rrr}
\C{\ccyb}{\ccy}{\ccd}{\ccb}{\ccc}{\ccpp}{\ccqq}=0, 
&\C{\ccya}{\ccf}{\ccy}{\cca}{\ccb}{\ccpp}{\ccrr}=0, 
&\A{\ccyc}{\ccf}{\cce}{\ccy}{\ccd}{\ccrr}{\ccqq}=0, \\[0.1cm]
\C{\cczb}{\ccf}{\cce}{\cca}{\ccz}{\ccpp}{\ccqq}=0, 
&\C{\ccza}{\cce}{\ccd}{\ccz}{\ccc}{\ccpp}{\ccrr}=0, 
&\Af{\cczc}{\cca}{\ccz}{\ccb}{\ccc}{\ccrr}{\ccqq}=0,
\end{array}
\end{align}
on faces, and the eight equations
\begin{align}\label{8corner}
\begin{array}{rr}
\ds\A{\ccy}{\ccya}{\ccf}{\ccyb}{\ccyc}{(\ccqa,\ccrb)}{(\ccpb,\ccra)}=0, \quad
&\A{\ccd}{\ccza}{\cce}{\ccyb}{\ccyc}{(\ccqb,\ccrb)}{(\ccpb,\ccra)}=0, \\[0.1cm]
\Cf{\ccb}{\cczc}{\cca}{\ccyb}{\ccya}{(\ccpa,\ccra)}{(\ccqa,\ccrb)}=0, \quad
&\Cf{\ccc}{\ccz}{\cczc}{\ccza}{\ccyb}{(\ccpa,\ccra)}{(\ccrb,\ccqb)}=0, \\[0.1cm]
\A{\ccf}{\ccya}{\ccy}{\cczb}{\ccyc}{(\ccqa,\ccra)}{(\ccpb,\ccrb)}=0, \quad
&\A{\cce}{\ccza}{\ccd}{\cczb}{\ccyc}{(\ccqb,\ccra)}{(\ccpb,\ccrb)}=0, \\[0.1cm]
\Cf{\cca}{\ccb}{\cczc}{\ccya}{\cczb}{(\ccpa,\ccrb)}{(\ccra,\ccqa)}=0, \quad
&\Cf{\ccz}{\ccc}{\cczc}{\ccza}{\cczb}{(\ccpa,\ccrb)}{(\ccra,\ccqb)}=0, 
\end{array}
\end{align}
on corners. An important difference from the original formulation of CAFCC is that there are now type-C equations in \eqref{6face} that are centered at faces of the face-centered cube.

\begin{figure}[h!]
\centering
\includegraphics{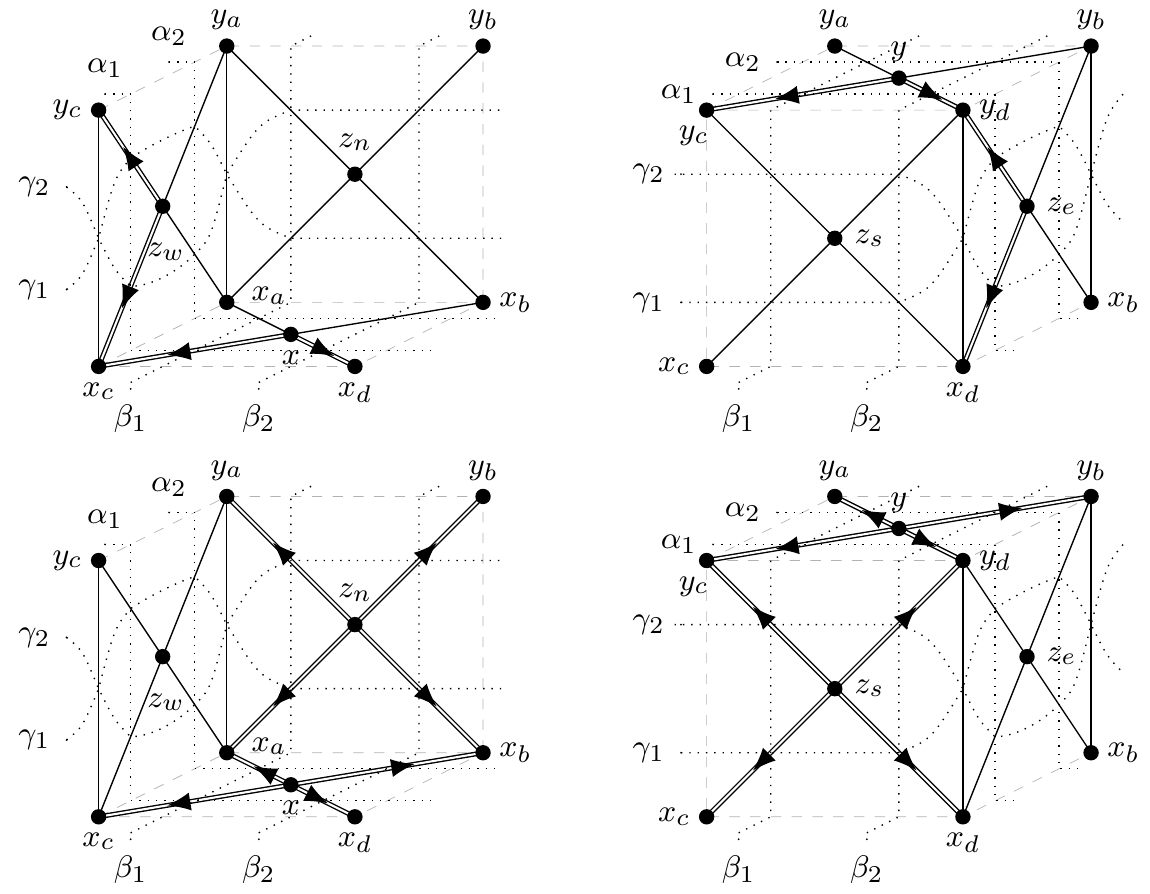}
\caption{Top: Labelling of vertices and edges for CAFCC involving type-C equations centered at both faces and corners. Bottom: Comparison with a diagram for CAFCC from \cite{Kels:2020zjn} that involves type-A and type-B equations on faces, and a type-C equation on corners.  The directed edges for this diagram have only been included for the purpose of comparison, and are unnecessary as it involves only one type of type-B equation and one type of type-C equation.} 
\label{fig:CAFCCcube}
\end{figure}

\subsubsection{CAFCC algorithm}


The six components of the parameters 
$\al=(\ccpa,\ccpb),\bt=(\ccqa,\ccqb),\gm=(\ccra,\ccrb)$, 
are fixed, while 
%
\begin{align}\label{CAFCCinitial}
\ccyb,\ccy,\ccd,\ccb,\ccyc,\ccya,
\end{align}
are chosen as initial variables.  There remain a total of eight undetermined variables on vertices of the face-centered cube, and each of the fourteen equations \eqref{6face} and \eqref{8corner} to be satisfied.  

For the initial conditions \eqref{CAFCCinitial}, the six steps below may be used to determine whether the equations \eqref{typeCCAFCC}--\eqref{typeACAFCC} satisfy CAFCC.

\begin{enumerate}

\item
The two equations
\begin{align}
\begin{split}
\A{\ccy}{\ccya}{\ccf}{\ccyb}{\ccyc}{(\ccqa,\ccrb)}{(\ccpb,\ccra)}=0,\\
\C{\ccyb}{\ccy}{\ccd}{\ccb}{\ccc}{\ccpp}{\ccqq}=0,
\end{split}
\end{align}
are used to uniquely determine the variables $\ccf$ and $\ccc$, respectively.

\item
The three equations
\begin{align}
\begin{split}
\C{\ccya}{\ccf}{\ccy}{\cca}{\ccb}{\ccpp}{\ccrr}=0, \\
\A{\ccyc}{\ccf}{\cce}{\ccy}{\ccd}{\ccrr}{\ccqq}=0, \\
\A{\ccf}{\ccya}{\ccy}{\cczb}{\ccyc}{(\ccqa,\ccra)}{(\ccpb,\ccrb)}=0,
\end{split}
\end{align}
are used to uniquely determine the three variables $\cca$, $\cce$, and $\cczb$, respectively.

\item 
The two equations
\begin{align}
\begin{split}
\A{\ccd}{\ccza}{\cce}{\ccyb}{\ccyc}{(\ccqb,\ccrb)}{(\ccpb,\ccra)}=0, \\
\A{\cce}{\ccza}{\ccd}{\cczb}{\ccyc}{(\ccqb,\ccra)}{(\ccpb,\ccrb)}=0,
\end{split}
\end{align}
must agree for the solution of the variable $\ccza$.

\item 
The two equations
\begin{align}
\begin{split}
\Cf{\cca}{\ccb}{\cczc}{\ccya}{\cczb}{(\ccpa,\ccrb)}{(\ccra,\ccqa)}=0,\\
\Cf{\ccb}{\cczc}{\cca}{\ccyb}{\ccya}{(\ccpa,\ccra)}{(\ccqa,\ccrb)}=0,
\end{split}
\end{align}
must agree for the solution of the variable $\cczc$.

\item 
The four equations
\begin{align}
\begin{split}
\C{\ccza}{\cce}{\ccd}{\ccz}{\ccc}{\ccpp}{\ccrr}=0, \\
\Af{\cczc}{\cca}{\ccz}{\ccb}{\ccc}{\ccrr}{\ccqq}=0, \\
\C{\cczb}{\ccf}{\cce}{\cca}{\ccz}{\ccpp}{\ccqq}=0, \\
\Cf{\ccc}{\ccz}{\cczc}{\ccza}{\ccyb}{(\ccpa,\ccra)}{(\ccrb,\ccqb)}=0,
\end{split}
\end{align}
must agree for the solution of the final variable $\ccz$.

\item 
The remaining equation
\begin{align}\label{CAFCCalgorithmlast}
\begin{split}
\Cf{\ccz}{\ccc}{\cczc}{\ccza}{\cczb}{(\ccpa,\ccrb)}{(\ccra,\ccqb)}=0,
\end{split}
\end{align}
must be satisfied by the variables obtained in the previous steps.

\end{enumerate}

If following steps (i) and (ii) the remaining steps (iii)--(vi) can be satisfied, then the equations \eqref{typeCCAFCC}--\eqref{typeACAFCC} satisfy the property of CAFCC.

As for the original formulation of CAFCC \cite{Kels:2020zjn}, the above property of CAFCC may be derived using solutions of a particular form of an interaction-round-a-face (IRF) type classical Yang-Baxter equation. Further details on this connection are given in Appendix \ref{sec:YBE}.


\subsection{Integrable type-C lattice, CAFCC, and consistency across adjacent cubes}\label{sec:multicube}

Because type-C equations possess the least symmetry in comparison to type-A and type-B equations, only particular arrangements of type-C equations in the square lattice were found previously to possess vanishing algebraic entropy as required for integrability.  At present a general set of conditions for when an arrangement of equations will have vanishing entropy are not known, but Figure \ref{fig:lattice-Casym} shows one example of an arrangement that was found \cite{GubbiottiKels} to give vanishing algebraic entropy for a pair of type-C equations as indicated in Figure \ref{fig:2typeC}.

\begin{figure}[h!]
\centering
\includegraphics{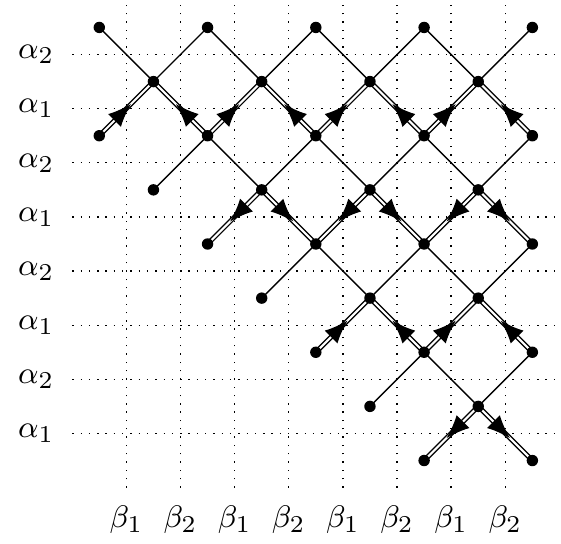}
\caption{A lattice arrangement that is integrable (in the sense of algebraic entropy) for the two different type-C equations of Figure \ref{fig:2typeC}. According to Figure \ref{fig:2typeC}, 
the equations $\Cf{x}{x_a}{x_b}{x_c}{x_d}{\ccpp}{\ccqq}=0$, $\C{x}{x_d}{x_c}{x_b}{x_a}{\ccpp}{\ccqq}=0$, $\C{x}{x_a}{x_b}{x_c}{x_d}{\ccpp}{\ccqq}=0$, and $\Cf{x}{x_d}{x_c}{x_b}{x_a}{\ccpp}{\ccqq}=0$, are respectively centered at the second, third, fourth, and fifth rows of vertices from the top.  This pattern repeats for every group of four rows of vertices in the lattice. }
\label{fig:lattice-Casym}
\end{figure}



Specific pairs of type-C equations that have vanishing algebraic entropy in the arrangement of Figure \ref{fig:lattice-Casym} are listed in the last two columns of Table \ref{table-AC} (where the expressions for the equations are given in Appendix \ref{app:equations}).  These equations satisfy CAFCC in combination with type-A equations as follows.

\begin{thm}\label{thm:CAFCCproperty}
Let the four equations $C$ in \eqref{6face} and the four equations $\overline{C}$ in \eqref{8corner} be chosen from one of the pairs of equations $C$ and $\overline{C}$ that are listed in the rows of Table \ref{table-AC}.  Then the choices of the equations $A$ and $\overline{A}$ in \eqref{6face} and \eqref{8corner} can be uniquely fixed by requiring their respective leg functions $a(x;y;\alpha,\beta)$ from \eqref{4leg} to match the leg functions of $C$ and $\overline{C}$ from \eqref{4legc}, as indicated in Table \ref{table-BC2} of Appendix \ref{app:equations}.  Each of the resulting ten combinations of equations listed in the rows of Table \ref{table-AC} satisfy the property of CAFCC.
\end{thm}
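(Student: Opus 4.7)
The plan is to verify the six-step CAFCC algorithm of \S\ref{sec:FCQECAFCC}\ref{sec:CAFCC} for each row of Table \ref{table-AC}. I would first dispose of the uniqueness claim for $A$ and $\overline{A}$: because the four-leg form \eqref{4leg} expresses a type-A equation entirely through a single leg function $a(x;y;\alpha,\beta)$, once one demands that $A$ (respectively $\overline{A}$) share the $a$-function with $C$ (respectively $\overline{C}$) appearing in \eqref{4legc}, the choice is fixed by direct inspection of Table \ref{table-BC2}. This is a bookkeeping step and requires no computation beyond checking that the entries of Table \ref{table-BC2} are pairwise distinct.

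For the CAFCC consistency itself, steps (i) and (ii) are immediate: each of the equations used there is multilinear in the corner variable being isolated, so each yields a unique rational expression in terms of the initial data \eqref{CAFCCinitial}. The substantive content lies in steps (iii)--(vi), where one must show that several a priori different expressions for the same vertex variable in fact coincide (steps (iii)--(v)), and that the final equation \eqref{CAFCCalgorithmlast} is automatically satisfied by the already-determined variables (step (vi)).

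My approach for steps (iii)--(vi) would be to exploit the four-leg forms \eqref{4leg} and \eqref{4legc} rather than the expanded polynomials \eqref{afflin}. Writing each type-A and type-C equation as a ratio of four fractional-linear leg factors allows the consistency conditions to be recast as multiplicative identities along closed loops of vertices on the face-centered cube. Because in each row of Table \ref{table-AC} the equations $A$ and $C$ share a common $a$-function (and likewise $\overline{A}$ and $\overline{C}$), many leg factors cancel telescopically when equations are composed across the cube, reducing each consistency check to an identity among the $a$ and $c$ functions tabulated in Table \ref{table-BC2}.

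The main obstacle will be step (vi), the verification of \eqref{CAFCCalgorithmlast} at the stage when all eight unknowns are fixed, since this is a genuine overdetermination rather than a defining equation for a new variable. I expect to dispose of it by invoking the IRF-type classical Yang-Baxter equation of Appendix \ref{sec:YBE}: the CAFCC configuration of Figure \ref{fig:CAFCCcube} is precisely the combinatorial content of that Yang-Baxter equation, so CAFCC for a given pair $(C,\overline{C})$ will follow uniformly from the fact that the underlying IRF Boltzmann weights constructed in \cite{Kels:2020zjn} solve it. This reduces the ten individual verifications required by Table \ref{table-AC} to checking, once and for all, that the constructions of \cite{Kels:2020zjn} furnish solutions of the Yang-Baxter equation in the new form stated in Appendix \ref{sec:YBE}; any residual cases not covered by that reduction can then be handled by direct symbolic computation from the explicit multilinear forms in Appendix \ref{app:equations}.
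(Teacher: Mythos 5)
Your outline is sound, but it takes a different route from the paper: the paper's entire proof of Theorem \ref{thm:CAFCCproperty} is the single sentence that it ``may be checked by direct computation,'' i.e.\ a brute-force symbolic verification of steps (i)--(vi) for each of the ten rows of Table \ref{table-AC} using the explicit multilinear polynomials of Appendix \ref{app:equations}. Your structural alternative --- recasting steps (iii)--(vi) as telescoping identities among the leg functions of \eqref{4leg} and \eqref{4legc}, and discharging the final overdetermined equation \eqref{CAFCCalgorithmlast} via the IRF Yang--Baxter equation --- is exactly the mechanism the paper itself gestures at in Appendix \ref{sec:YBE}, where the fourteen CAFCC equations are identified with the fourteen partial derivatives \eqref{YBEeqmo}--\eqref{YBEeqmo2} of the CYBE. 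So conceptually you are reconstructing the derivation that motivated the theorem, while the paper settles for verifying its output. What your approach buys is uniformity and an explanation of \emph{why} the consistency holds; what the paper's approach buys is a finite, unambiguous certificate for the precise polynomial equations as stated.

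Two caveats keep your plan from being a complete proof as written. First, the reduction to the CYBE is not ``once and for all'': each row of Table \ref{table-AC} corresponds to a different set of six edge functions \eqref{CSTRsol}, so you would still need to verify, case by case, that those functions solve the CYBE in the \emph{new} form \eqref{YBEdef}--\eqref{YBEeqmo} --- a verification the paper does not supply and which is of comparable difficulty to the direct check you are trying to avoid. Second, the passage between the four-leg forms \eqref{4leg}, \eqref{4legc} and the multilinear polynomials \eqref{afflin} involves clearing denominators, so an identity among leg functions establishes the consistency only away from the vanishing loci of those denominators; since the theorem is a statement about the polynomial equations, you must either argue by Zariski density or fall back on the polynomial computation for the degenerate configurations. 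Your closing remark that residual cases ``can be handled by direct symbolic computation'' effectively concedes that the paper's method is doing the load-bearing work; as it stands, the essential verification is deferred rather than performed.
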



\begin{table}[h!]
\centering
\includegraphics{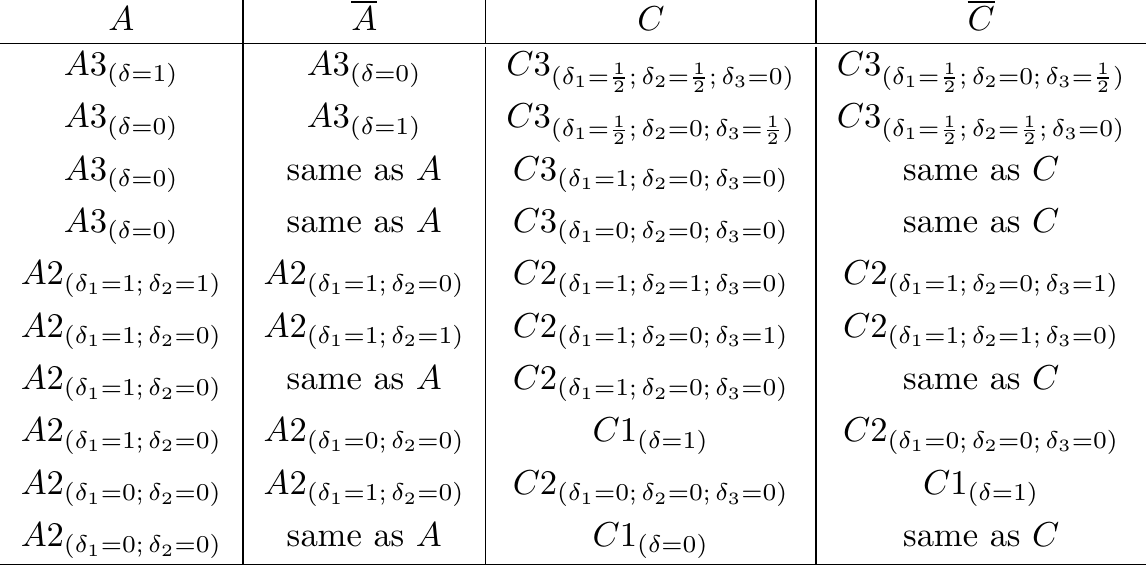}
\caption{Combinations of type-A and type-C equations (defined in Appendix \ref{app:equations}) for \eqref{6face} and \eqref{8corner} which satisfy CAFCC.  Pairs of type-C equations in columns $C$ and $\overline{C}$ have vanishing algebraic entropy in the arrangement of Figure \ref{fig:lattice-Casym}.  ``Same as $A$'' (or ``same as $C$'') indicates that the same equation that is listed for $A$ ($C$) should be used for $\overline{A}$ ($\overline{C}$).  }
\label{table-AC}
\end{table}


Theorem \ref{thm:CAFCCproperty} may be checked by direct computation.  Theorem \ref{thm:CAFCCproperty} gives the desired multidimensional consistency property for type-C equations that appear on faces of the face-centered cube, which allows them to be considered as integrable systems in higher-dimensional lattices and may be used to derive their Lax pairs, which will be shown in the following section.  

One of the consequences of the property of CAFCC is that because the evolution of face-centered quad equations involves equations that overlap, when considering consistency on adjacent face-centered cubes in higher-dimensional lattices there must be more instances of consistent equations than those just defined on the individual cubes.  This also applies (but was not considered) to the original formulation of CAFCC \cite{Kels:2020zjn}.  

This can be illustrated for the simplest case of extending the equations into a third dimension, with the example of the four CAFCC cubes on the left of Figure \ref{fig:4cubes}.  These four individual cubes will be denoted as $cF^{(nw)}$, $cF^{(ne)}$, $cF^{(sw)}$, $cF^{(se)}$, for the north-west, north-east, south-west, and south-east cubes respectively.  The types of equations on the pair of cubes $(cF^{(sw)},cF^{(se)})$, and on the pair of cubes $(cF^{(nw)},cF^{(ne)})$, are related by the exchanges $A\leftrightarrow \overline{A}$ and $C\leftrightarrow\overline{C}$.  Since each cube involves fourteen equations, there are a total of 56 equations on the individual cubes (but four pairs of these equations that appear on coinciding faces are equivalent), and these equations involve 38 different variables (associated to distinct vertices).

\begin{figure}[h!]
\centering
\includegraphics{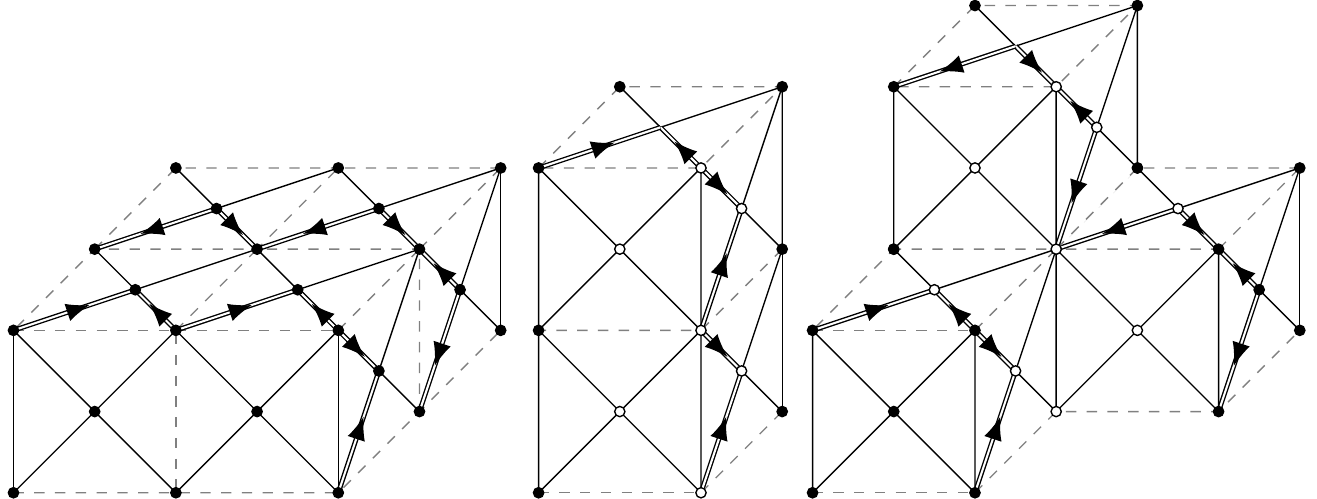}
\caption{Left: Four CAFCC cubes. The north-west, north-east, south-west, and south-east cubes are respectively denoted by $cF^{(nw)}$, $cF^{(ne)}$, $cF^{(sw)}$, and $cF^{(se)}$.  Center: A seven-point equation on white vertices of $cF^{(sw)}$ and another CAFCC cube. Right: A nine-point equation on white vertices of $cF^{(ne)}$, $cF^{(sw)}$, and another CAFCC cube.}
\label{fig:4cubes}
\end{figure}

The variables on each of the four cubes will be labelled according to the top diagram of Figure \ref{fig:CAFCCcube} and distinguished by the superscripts $(nw),(ne),(sw),(se)$ for the respective cubes. An initial value problem for the 56 equations can be posed by choosing ten initial variables as
\begin{align}\label{4cubeICs}
\ccyb^{(nw)},\ccy^{(nw)},\ccd^{(nw)}(=\ccy^{(ne)}),\ccb^{(nw)}(=\ccy^{(sw)}),\ccyc^{(nw)},\ccya^{(nw)},
\ccyb^{(ne)},\ccd^{(ne)}, 
\ccyb^{(sw)},\ccb^{(sw)}.
\end{align}

\begin{prop}
With the assignment of variables and equations on the four cubes according to \eqref{6face}, \eqref{8corner}, and the top diagram of Figure \ref{fig:CAFCCcube}, and with the initial conditions \eqref{4cubeICs}, there is an overdetermined system of 56 equations for 28 unknown variables on the four cubes $cF^{(nw)}$, $cF^{(ne)}$, $cF^{(sw)}$, $cF^{(se)}$.  If the four equations \eqref{typeCCAFCC} and \eqref{typeACAFCC} satisfy CAFCC then this overdetermined system of equations is consistent.
\end{prop}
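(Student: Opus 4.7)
The plan is to establish consistency by a sequential application of the CAFCC property (Theorem \ref{thm:CAFCCproperty}) to the four cubes $cF^{(nw)}$, $cF^{(ne)}$, $cF^{(sw)}$, $cF^{(se)}$. The key observation is that the choice of initial variables in \eqref{4cubeICs} is designed so that $cF^{(nw)}$ receives all six of its required initial variables directly, while each of the other three cubes receives some of its initial variables from \eqref{4cubeICs} and the remainder from variables that will be determined on already-processed cubes through shared faces.

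First I would apply the CAFCC algorithm of \S\ref{sec:FCQECAFCC}\ref{sec:CAFCC} on $cF^{(nw)}$, starting from $\ccyb^{(nw)}, \ccy^{(nw)}, \ccd^{(nw)}, \ccb^{(nw)}, \ccyc^{(nw)}, \ccya^{(nw)}$. The CAFCC property uniquely determines all eight remaining variables on that cube and verifies all fourteen of its equations. Among the newly determined variables, those lying on the eastern shared face with $cF^{(ne)}$ and on the southern shared face with $cF^{(sw)}$ then play the role of initial variables for those neighbouring cubes. Combined with the entries $\ccy^{(ne)}(=\ccd^{(nw)}), \ccyb^{(ne)}, \ccd^{(ne)}$ and $\ccy^{(sw)}(=\ccb^{(nw)}), \ccyb^{(sw)}, \ccb^{(sw)}$ supplied by \eqref{4cubeICs}, this gives a complete set of six initial variables on each of $cF^{(ne)}$ and $cF^{(sw)}$. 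A further application of CAFCC on each of these two cubes determines the remaining eight variables and verifies the fourteen equations on each. Finally, the six initial variables required on $cF^{(se)}$ are by this point all fixed via vertices shared with $cF^{(ne)}$ and $cF^{(sw)}$, and one last application of CAFCC closes the system on $cF^{(se)}$.

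Two points require care. First, the equations centered on faces shared between adjacent cubes must literally coincide, so that the four pairs of coinciding equations noted in the proposition statement indeed reduce to single constraints; this is ensured by the identification of shared variables across the cube boundaries and by the symmetric alternation $A\leftrightarrow\overline{A}$, $C\leftrightarrow\overline{C}$ between the north and south pairs of cubes, which correctly matches the orientation convention for directed edges of type-C equations (cf.\ Figure \ref{fig:2typeC}) across shared faces. Second, at each of the three subsequent CAFCC applications one must check that the already-determined shared vertices fall into the correctly labelled positions $\ccyb, \ccy, \ccd, \ccb, \ccyc, \ccya$ on the next cube, according to the top diagram of Figure \ref{fig:CAFCCcube}.

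The main obstacle is precisely this label-matching across shared faces: one must verify that the variables determined on a processed cube populate the six initial-variable slots of the next cube in the correct geometric positions, so that the single-cube CAFCC algorithm can be invoked without modification. Once this matching is established, the consistency of the fourteen equations on each cube follows immediately from the single-cube CAFCC property, and the overall consistency of the 56 equations for the 28 unknowns follows by induction across the four cubes.
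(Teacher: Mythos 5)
Your overall strategy --- solve $cF^{(nw)}$ by the single-cube CAFCC algorithm and then propagate to the remaining cubes through shared faces --- is the natural one, but the specific claim on which your induction rests is false: the data inherited by $cF^{(ne)}$, $cF^{(sw)}$ and $cF^{(se)}$ is \emph{not} the standard initial set $\{x,x_a,x_b,x_c,z_n,z_w\}$ of \eqref{CAFCCinitial}, so the single-cube algorithm cannot be ``invoked without modification''. Concretely, the face shared by $cF^{(nw)}$ and $cF^{(ne)}$ is the $z_e^{(nw)}=z_w^{(ne)}$ face, whose vertices are $z_w^{(ne)},y_a^{(ne)},x_a^{(ne)},y_c^{(ne)},x_c^{(ne)}$; together with $x^{(ne)},x_b^{(ne)}$ from \eqref{4cubeICs} this gives seven known quantities on $cF^{(ne)}$, among which the face variable $z_n^{(ne)}$ of the standard initial set is \emph{missing}, while the corner variables $y_a^{(ne)},y_c^{(ne)}$ are \emph{extra}. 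The same happens on $cF^{(sw)}$ (where $z_w^{(sw)}$ is missing), and on $cF^{(se)}$ the situation is worse: the face centre $x^{(se)}$ lies on no shared face and is not in \eqref{4cubeICs}, so it is one of the six genuine unknowns of that cube, contradicting your assertion that ``the six initial variables required on $cF^{(se)}$ are by this point all fixed''.

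The argument can be repaired, but it needs two ingredients you do not supply. First, each missing variable must be recovered from a corner equation of the new cube itself: for instance the equation centered at $x_a^{(ne)}$, namely $A(x_a^{(ne)};z_w^{(ne)},y_a^{(ne)},x^{(ne)},z_n^{(ne)};\,\cdot\,,\,\cdot\,)=0$, is multilinear in its four corner arguments and hence determines $z_n^{(ne)}$ uniquely from the already known $z_w^{(ne)},y_a^{(ne)},x^{(ne)}$; only after this step can the algorithm of \S\ref{sec:FCQECAFCC}\ref{sec:CAFCC} be run on $cF^{(ne)}$ (and similarly $x^{(se)}$ must be solved for on $cF^{(se)}$). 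Second, the redundant inherited data must be shown consistent with what the algorithm then produces: the value of $y_c^{(ne)}$ computed in step (ii) from the face equation centered at $z_w^{(ne)}$ agrees with the inherited value $y_d^{(nw)}$ precisely because that equation coincides with the $z_e^{(nw)}$ face equation already verified on $cF^{(nw)}$ and is linear in $y_c^{(ne)}$. This is exactly the role played by the four pairs of coinciding face equations, which your write-up mentions but never actually uses to absorb the surplus data. Without these two steps the passage from one cube to the next does not close, and the induction fails at the sentence ``this gives a complete set of six initial variables on each of $cF^{(ne)}$ and $cF^{(sw)}$''.
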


Besides the 56 equations on the individual cubes, there are additional equations that will be satisfied which involve variables on two or more different cubes. However, these equations are not independent as they are satisfied as a consequence of the equations that are satisfied on the individual cubes.  As an example, there are the following four equations
\begin{align}\label{4cubeexample1}
\begin{split}
\A{y_b^{(nw)}}{z_n^{(nw)}}{z_n^{(ne)}}{y^{(nw)}}{y^{(ne)}}{(\alpha_2,\gamma_2)}{\hat{\bt}}=0,\\
\A{x_b^{(nw)}}{z_n^{(nw)}}{z_n^{(ne)}}{x^{(nw)}}{x^{(ne)}}{(\alpha_2,\gamma_1)}{\hat{\bt}}=0, \\
\Cf{y_d^{(nw)}}{z_s^{(ne)}}{z_s^{(nw)}}{y^{(ne)}}{y^{(nw)}}{(\alpha_1,\gamma_2)}{\bt}=0,\\
\Cf{x_d^{(nw)}}{b_s^{(nw)}}{z_s^{(ne)}}{x^{(nw)}}{x^{(ne)}}{(\alpha_1,\gamma_1)}{\hat{\bt}}=0,
\end{split}
\end{align}
respectively centered at the four common variables $y_b^{(nw)}=y_a^{(ne)}$, $x_b^{(nw)}=x_a^{(ne)}$, $y_d^{(nw)}=y_c^{(ne)}$, $x_d^{(nw)}=x_c^{(ne)}$, of the pair of cubes $(cF^{(nw)},cF^{(ne)})$.  There are similar sets of four equations on three other pairs of cubes $(cF^{(nw)},cF^{(sw)})$, $(cF^{(ne)},cF^{(se)})$, and $(cF^{(sw)},cF^{(se)})$, where each pair shares a face.  As mentioned above, such equations are satisfied as a consequence of the CAFCC equations on the individual cubes.  For example, the third equation of \eqref{4cubeexample1} is a consequence of the following two type-C equations 
\begin{align}\label{4cubeeq4}
\begin{split} 
\Cf{\ccz^{(nw)}}{\ccc^{(nw)}}{\cczc^{(nw)}}{\ccza^{(nw)}}{\cczb^{(nw)}}{(\ccpa,\ccrb)}{(\ccra,\ccqb)}=0, \\
\Cf{\cca^{(ne)}}{\ccb^{(ne)}}{\cczc^{(ne)}}{\ccya^{(ne)}}{\cczb^{(ne)}}{(\ccpa,\ccrb)}{(\ccra,\ccqa)}=0,
\end{split}
\end{align}
which are centered at the common vertex $y_d^{(nw)}=y_c^{(ne)}$ of the pair of cubes $(cF^{(nw)},cF^{(ne)})$.  This may be seen by rewriting the above two equations into their four-leg forms \eqref{4legc} and then dividing the second equation by the first.  Similarly, the other equations in \eqref{4cubeexample1} are consequences of two equations centered at the other three common vertices of the pair of cubes $(cF^{(nw)},cF^{(ne)})$.

As another example, there are the following two equations
\begin{align}\label{4cubeeq1}
\begin{split}
\Cf{\ccz^{(nw)}}{\cczb^{(se)}}{\cczb^{(sw)}}{\cczb^{(ne)}}{\cczb^{(nw)}}{\al}{\bt}=0, \\
\Cf{x_d^{(nw)}}{z_b^{(se)}}{z_b^{(sw)}}{z_b^{(ne)}}{z_b^{(nw)}}{\al}{\bt}=0,
\end{split}
\end{align}
which involve distinct variables of all four cubes on the left of Figure \ref{fig:4cubes}.  Once again these are satisfied as a consequence of CAFCC equations which are centered at a common vertex of the four cubes.  For example, the first equation in \eqref{4cubeeq1} is a consequence of the two type-C equations \eqref{4cubeeq4}, and two type-A equations on $cF^{(sw)}$ and $cF^{(se)}$ which are each centered at the same vertex $y_d^{(nw)}=y_c^{(ne)}=y_b^{(sw)}=y_a^{(se)}$.

This section considered only examples of CAFCC equations on multiple cubes that involve five different vertices, however, by using different arrangements of cubes there can also appear equations that involve from six to nine vertices, ({\it e.g.}, see the examples of Figure \ref{fig:4cubes}).  It is not clear what the equations involving six to nine vertices would be (although for systems of only type-A four-leg equations these should be related to Adler's discrete Toda equations on planar graphs \cite{AdlerPlanarGraphs}), but such equations could also turn out to possess nice properties associated to discrete integrability, which warrants further investigation. 

\section{Lax matrices}\label{sec:Lax}

Using the original formulation of CAFCC \cite{Kels:2020zjn}, it has been shown how to derive the Lax matrices for type-A and type-B CAFCC equations \cite{KelsLax}.  This method relied on the fact that CAFCC was formulated with type-A and type-B equations centered at faces of the face-centered cube.  Importantly, the new formulation of CAFCC given in \S\ref{sec:FCQECAFCC} has type-C equations centered at faces of the face-centered cube, which will allow for the derivation of their Lax matrices.  Lax matrices derived from both type-A and type-C CAFCC equations from \cite{KelsLax} will be required to construct the Lax pairs of type-C equations, and  
for convenience the method used to obtain these Lax matrices is included below.

\subsubsection{Type-A Lax matrix}

Due to the symmetries \eqref{refsym}--\eqref{symA}, for the type-A equation it does not matter which two variables are chosen for constructing the matrix equation.  Here the two variables $x_c$ and $x_d$ will be chosen to be written as
\begin{align}\label{eqtolaxsubs}
    x_c=\frac{f}{g}, \qquad x_d=\frac{f_L}{g_L}.
\end{align}
Then because of linearity in both $x_c$ and $x_d$, the equation $\A{x}{x_a}{x_b}{x_c}{x_d}{\al}{\bt}=0$ may be solved uniquely for the variable $x_d=\frac{f_L}{g_L}$ in the following form 
\begin{align}\label{xdsol}
    x_d=\frac{f_L}{g_L}
       =\frac{L_{11}(x;x_a,x_b;\al,\bt)f+L_{12}(x;x_a,x_b;\al,\bt)g}
             {L_{21}(x;x_a,x_b;\al,\bt)f+L_{22}(x;x_a,x_b;\al,\bt)g}.
\end{align}
The above equation can then be rewritten in a matrix form
\begin{align}\label{mateq1}
    \psi_{L}=\LA(x;x_a,x_b;\al,\bt)\psi,
\end{align}
where
\begin{align}
    \psi=\left(\!\!\begin{array}{c} f \\ g \end{array}\!\!\right)\!,\qquad 
    \psi_L=\left(\!\!\begin{array}{c} f_L \\ g_L \end{array}\!\!\right)\!,
\end{align}
and $\LA(x;x_a,x_b;\al,\bt)$ is the $2\times2$ matrix  
\begin{align}\label{Lentries}
\begin{split}
\LA(x;x_a,x_b;\al,\bt) =D_{\LA}\!\left(\!\begin{array}{cc}
  \ds L_{11}(x;x_a,x_b;\al,\bt) &
  \ds L_{12}(x;x_a,x_b;\al,\bt) \\
  \ds L_{21}(x;x_a,x_b;\al,\bt) &
  \ds L_{22}(x;x_a,x_b;\al,\bt)
 \end{array}\!\right)\!.
 \end{split}
\end{align}
The $D_{\LA}$ is an as yet unspecified normalisation factor that arises from separating the numerator and denominator of \eqref{xdsol}.  The matrix $\LA(x;x_a,x_b;\al,\bt)$ is one of the desired Lax matrices that will be used for the Lax pair of type-C equations.  Since it is obtained from a type-A equation it will be referred to as a type-A Lax matrix.  The reinterpretation of the type-A equation on the left of Figure \ref{fig:3fig4quad} as a matrix equation for the vectors $\Psi$ and $\Psi_L$ is shown in Figure \ref{laxfig1}.  Note that due to the symmetry \eqref{refsym}, the matrix $\LA(x;x_b,x_a;\al,\hat{\bt})$ is proportional to the inverse of $\LA(x;x_a,x_b;\al,\bt)$. 

\vspace{-0.3cm}

\begin{figure}[h!]
\centering
\includegraphics{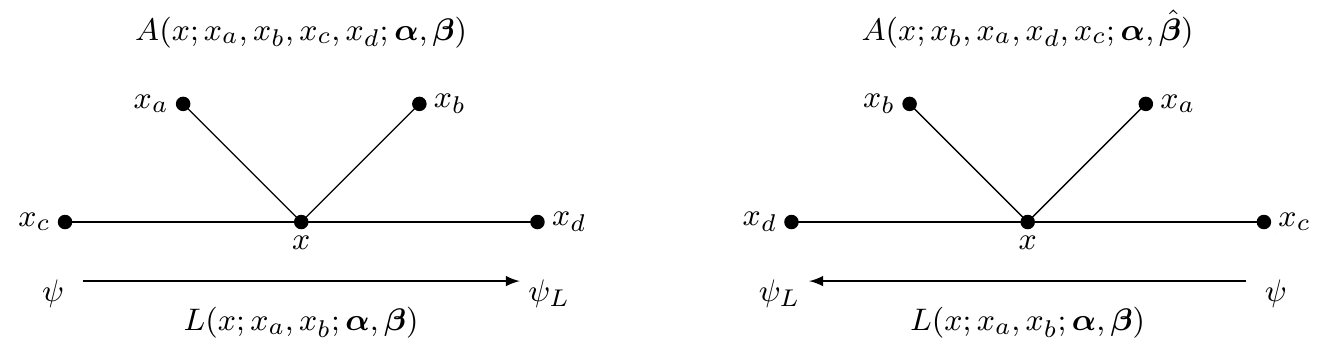}
\caption{The type-A equation from Figure \ref{fig:3fig4quad} is reinterpreted as equation \eqref{mateq1} for the matrix \eqref{Lentries}.  Both of these diagrams are equivalent, due to the symmetry \eqref{refsym}.}  
\label{laxfig1}
\end{figure}

\subsubsection{Type-C Lax matrix}

Since they have less symmetry, the expression for the Lax matrix derived from type-C equations will depend on which choices of variables are used to construct the matrix equation.  Let the variables $x_a$ and $x_c$ be written as
\begin{align}\label{eqtolaxsubs2}
    x_a=\frac{f_L}{g_L}, \qquad x_c=\frac{f}{g}.
\end{align}
Then because of linearity in both $x_a$ and $x_c$, the equation $\C{x}{x_a}{x_b}{x_c}{x_d}{\al}{\bt}=0$ may be solved uniquely for the variable $x_a=\frac{f_L}{g_L}$ in the following form 
\begin{align}\label{xdsol2}
    x_a=\frac{f_L}{g_L}
       =\frac{L_{11}(x;x_b,x_d;\al,\bt)f+L_{12}(x;x_b,x_d;\al,\bt)g}
             {L_{21}(x;x_b,x_d;\al,\bt)f+L_{22}(x;x_b,x_d;\al,\bt)g}.
\end{align}
The above equation can then be rewritten in a matrix form
\begin{align}\label{mateq2}
    \psi_{L}=\LC(x;x_b,x_d;\al,\bt)\psi,
\end{align}
where
\begin{align}
    \psi=\left(\!\!\begin{array}{c} f \\ g \end{array}\!\!\right)\!,\qquad 
    \psi_L=\left(\!\!\begin{array}{c} f_L \\ g_L \end{array}\!\!\right)\!,
\end{align}
and the $\LC(x;x_b,x_d;\al,\bt)$ is a $2\times2$ matrix given by 
\begin{align}\label{Lentries2}
\begin{split}
\LC(x;x_b,x_d;\al,\bt) =D_{\LC}\!\left(\!\begin{array}{cc}
  \ds L_{11}(x;x_b,x_d;\al,\bt) &
  \ds L_{12}(x;x_b,x_d;\al,\bt) \\
  \ds L_{21}(x;x_b,x_d;\al,\bt) &
  \ds L_{22}(x;x_b,x_d;\al,\bt)
 \end{array}\!\right)\!.
 \end{split}
\end{align}
The $D_{\LC}$ is an as yet unspecified normalisation factor that arises from separating the numerator and denominator of \eqref{xdsol2}.  The matrix $\LC(x;x_b,x_d;\al,\bt)$ is the second Lax matrix that will be used for the Lax pair of type-C equations.  Since it is obtained from a type-C equation it will be referred to as a type-C Lax matrix (although it is not the only Lax matrix that can be obtained from type-C equations \cite{KelsLax}).  The reinterpretation of the type-C equation on the right of Figure \ref{fig:3fig4quad} as a matrix equation for the vectors $\Psi$ and $\Psi_L$ is shown in Figure \ref{laxfig4}.

\vspace{-0.2cm}

\begin{figure}[h!]
\centering
\includegraphics{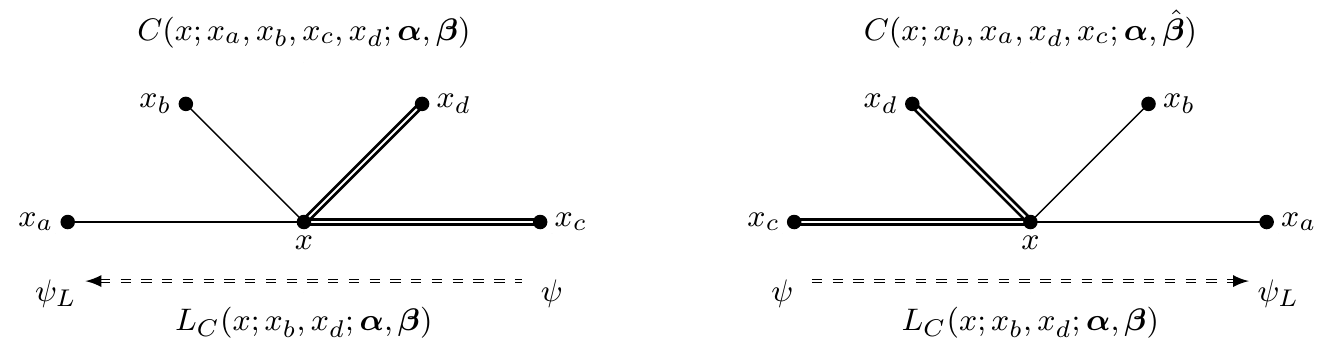}

\caption{The type-C equation from Figure \ref{fig:3fig4quad} is reinterpreted as equation \eqref{mateq2} for the matrix \eqref{Lentries2}.   Both of these diagrams are equivalent, due to the symmetry \eqref{refsym}.} 
\label{laxfig4}
\end{figure}

\vspace{-0.2cm}

If the above procedure is used for the inverse transition associated to $x_a\to x_c$ (or $\psi_L\to\psi$), this will give a matrix proportional to the inverse of \eqref{Lentries2}.  This inverse Lax matrix will also be used in the Lax equation, and thus for convenience it will be denoted separately by $\LCi(x;x_a,x_c;\al,\bt)$, where
\begin{align}\label{Lentries3}
\LCi(x;x_a,x_c;\al,\bt)=\LC(x;x_a,x_c;\al,\bt)^{-1}.
\end{align}

\subsection{Lax compatibility equation for type-C CAFCC equations}

The compatibility equation for the Lax matrices \eqref{Lentries} and \eqref{Lentries2} that reproduces a type-C equation can be derived from the property of CAFCC of \S\ref{sec:FCQECAFCC}.  The main idea from \cite{KelsLax} is to take two different evolutions between two vertices at the centers of opposite faces of the face-centered cube.  In terms of the top diagram of Figure \ref{fig:CAFCCcube}, the two type-A equations centered at $\ccy$ and $\ccf$ will be used for one evolution $\ccyb\to\ccyc\to\cczb$, and the two type-C equations centered at $\ccb$ and $\cca$ will be used for the other evolution $\ccyb\to\cczc\to\cczb$.  The consistency of these two different evolutions then provides a compatibility equation for the type-C equation centered at $\ccya$.

Specifically, the following four equations from \eqref{8corner}
\begin{align}\label{laxiniteqs}
\begin{array}{rr}
    \A{\ccy}{\ccya}{\ccf}{\ccyb}{\ccyc}{(\ccqa,\ccrb)}{(\ccpb,\ccra)}=0,\; 
    &\Cf{\ccb}{\cczc}{\cca}{\ccyb}{\ccya}{(\ccpa,\ccra)}{(\ccqa,\ccrb)}=0, \\[0.1cm]
    \A{\ccf}{\ccya}{\ccy}{\cczb}{\ccyc}{(\ccqa,\ccra)}{(\ccpb,\ccrb)}=0, \;
    &\Cf{\cca}{\ccb}{\cczc}{\ccya}{\cczb}{(\ccpa,\ccrb)}{(\ccra,\ccqa)}=0,
\end{array}
\end{align}
are used to define matrices for the transitions $\ccyb\to\ccyc$, $\ccyb\to\cczc$, $\ccyc\to\cczb$, $\cczc\to\cczb$, respectively.  These matrices are given in terms of \eqref{Lentries}, \eqref{Lentries2}, and \eqref{Lentries3}, as
\begin{align}\label{laxmatAC}
\begin{array}{rrr}
&\LL=\LA(\ccy;\ccya,\ccf;(\ccqa,\ccrb),(\ccpb,\ccra)), \qquad
&\M=\LC(\ccb;\cca,\ccya;(\ccpa,\ccra),(\ccqa,\ccrb)), \\[0.1cm]
&\Mb=\LA(\ccf;\ccy,\ccya;(\ccqa,\ccra),(\ccrb,\ccpb)), \qquad
&\Lb=\LCi(\cca;\ccb,\ccya;(\ccpa,\ccrb),(\ccqa,\ccra)).
\end{array}
\end{align}
The arrangement of equations \eqref{laxiniteqs} and associated matrices \eqref{laxmatAC} is shown schematically in Figure \ref{fig:LaxfigC}.  Note that according to Figure \ref{fig:3fig4quad}, the equation that is centered at $\ccya$ as a result of this arrangement is a type-C equation.

\vspace{-0.2cm}

\begin{figure}[h!]
\centering
\includegraphics{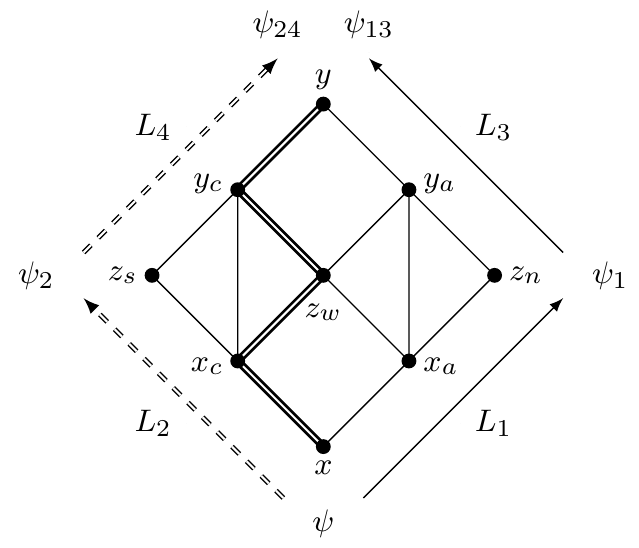}
\caption{Type-A and type-C equations \eqref{laxiniteqs} centered at $\ccy,\ccf,\ccb,\cca$ on the CAFCC cube of top diagram of Figure \ref{fig:CAFCCcube}, are reinterpreted as four matrices $\LL,\Mb,\M,\Lb$ in \eqref{laxmatAC} using Figures \ref{laxfig1} and \ref{laxfig4}.  The compatibility condition  $\psi_{13}\doteq\psi_{24}$ implies $\Lb\M-\Mb\LL\doteq 0$, where $\doteq$ indicates equality on solutions of the equation \eqref{laxeq2} centered at $\ccya$, which according to Figure \ref{fig:3fig4quad} is type-C.}
\label{fig:LaxfigC}
\end{figure}

\vspace{-0.2cm}

Next denoting
\begin{align}
    \psi_{13}=\Mb\psi_1=\Mb\LL\psi,\qquad \psi_{24}=\Lb\psi_2=\Lb\M\psi,
\end{align}
the compatibility condition is
\begin{align}\label{cc}
    \psi_{13}\doteq\psi_{24}.
\end{align}
The $\doteq$ denotes equality on solutions of the equation centered at $\ccya$ in Figure \ref{fig:LaxfigC}, which is
\begin{align}\label{laxeq2}
    \C{\ccya}{\ccf}{\ccy}{\cca}{\ccb}{\ccpp}{\ccrr}=0.
\end{align}
This corresponds to an equation in \eqref{6face} that appears on the face of the CAFCC cube in the top diagram of Figure \ref{fig:CAFCCcube}.  The compatibility condition \eqref{cc} is equivalently written in terms of Lax matrices \eqref{laxmatAC} as
\begin{align}\label{laxcomp2}
\Lb\M-\Mb\LL\doteq 0.
\end{align}
This is the desired equation which reinterprets the CAFCC property of the equations \eqref{laxiniteqs}, as a compatibility condition for the Lax matrices \eqref{laxmatAC} satisfied on the solution of a type-C CAFCC equation.  Analogously to CAC equations and their Lax pairs, $\ccqa$ is identified as the spectral parameter that comes from the additional third parameter associated to the CAFCC cube of the top diagram of Figure \ref{fig:CAFCCcube}.  A comparison of the Lax pairs arising from CAC and CAFCC is shown in Figure \ref{fig:CAFCCcomparison}.

\begin{figure}[h!]
\centering
\includegraphics{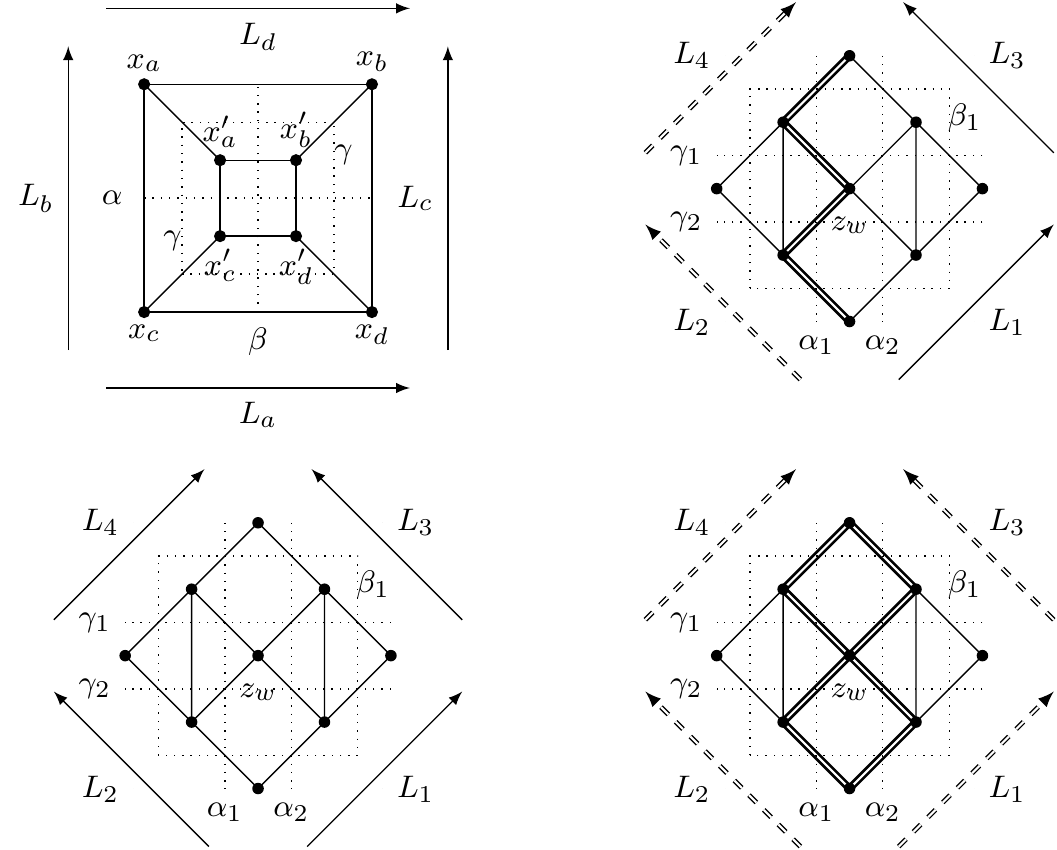}
\caption{Lax matrix equations for regular quad equations satisfying CAC (top-left), and face-centered quad equations satisfying CAFCC (top-right and bottom). On the top-left, the Lax compatibility equation would be satisfied on solutions of a quad equation $Q(x_a',x_b',x_c',x_d';\alpha,\beta)=0$, and $\gamma$ would be identified as the spectral parameter.  On the top-right, the Lax compatibility equation \eqref{laxcomp2} is satisfied on solutions of the type-C equation \eqref{laxeq2} (using the variable labelling of Figure \ref{fig:LaxfigC}), and $\beta_1$ is identified as the spectral parameter.  On the bottom, the Lax equations are satisfied on solutions of type-A and -B equations \cite{KelsLax}.}
\label{fig:CAFCCcomparison}
\end{figure}


\subsubsection{Compatibility equation for \texorpdfstring{$C3_{(\delta_1;\,\delta_3;\,\delta_2)}$}{C3(delta1;delta2;delta3)}.}

The type-A equation $A3_{(\delta)}$ is given in \eqref{a3d}.  The Lax matrix obtained from this equation may be written as \cite[Sec.~4.1.1]{KelsLax}
\begin{align}\label{LentriesA}
\LA(x;x_a,x_b;\al,\bt)=\frac{1}{D_{\LA}}\sum_{i=0}^2(L_{x^i}+\delta\Delta_{x^i})x^i,
\end{align}
where
\begin{gather}\label{a3mata}
    \Lxa=M(x_a,x_b),\qquad \Lxc=x_ax_bPM(\tfrac{1}{x_a},\tfrac{1}{x_b})P,
\\
    \Lxb=4\left(\!\!\begin{array}{cc}
    \bigl(\tfrac{\alpha_1}{\alpha_2}-\tfrac{\alpha_2}{\alpha_1}\bigr)x_a + \bigl(\tfrac{\alpha_1 \alpha_2}{\beta_1\beta_2} - \tfrac{\beta_1 \beta_2}{\alpha_1\alpha_2}\bigr)x_b & \bigl(\tfrac{\beta_1}{\beta_2}-\tfrac{\beta_2}{\beta_1})x_a x_b \\[0.15cm] \tfrac{\beta_1}{\beta_2}-\tfrac{\beta_2}{\beta_1} & \bigl(\tfrac{\alpha_1 \alpha_2}{\beta_1\beta_2}-\tfrac{\beta_1 \beta_2}{\alpha_1\alpha_2}\bigr)x_a + \bigl(\tfrac{\alpha_1}{\alpha_2}-\tfrac{\alpha_2}{\alpha_1}\bigr)x_b
    \end{array}\!\!\right)\!,
\\
    \Dxa=0,\qquad\Dxb=\left(\!\!\begin{array}{cc}
    0 & (\tfrac{\alpha_2}{\alpha_1}-\tfrac{\alpha_1}{\alpha_2})(\tfrac{\beta_1}{\beta_2}-\tfrac{\beta_2}{\beta_1})(\tfrac{\alpha_1\alpha_2}{\beta_1\beta_2} - \tfrac{\beta_1\beta_2}{\alpha_1\alpha_2}) \\ 0 & 0
    \end{array}\!\!\right)\!,
\\
\label{a3matb}
    \Dxc=\tfrac{(\alpha_1^2-\beta_1^2)(\alpha_1^2-\beta_2^2)(\alpha_2^2-\beta_1^2)(\alpha_2^2-\beta_2^2)}{(\alpha_1\alpha_2\beta_1\beta_2)^2}
    \left(\!\!\begin{array}{cc}
    \tfrac{\alpha_1\beta_1}{\beta_1^2-\alpha_1^2}  &  \tfrac{x_a\alpha_2\beta_1}{\alpha_2^2-\beta_1^2}+\tfrac{x_b\alpha_2\beta_2}{\beta_2^2-\alpha_2^2} \\[0.15cm] 
    0 & \tfrac{\alpha_1\beta_2}{\beta_2^2-\alpha_1^2}
    \end{array}\!\!\right)\!,
\end{gather}
and
\begin{align}
P=\left(\!\!\begin{array}{cc}0 & 1 \\ 1 & 0\end{array}\!\!\right),\qquad
M(x_a,x_b)=4\left(\!\!\begin{array}{cc}
    \tfrac{\beta_1}{\alpha_1}-\tfrac{\alpha_1}{\beta_1} & \bigl(\tfrac{\alpha_2}{\beta_1}-\tfrac{\beta_1}{\alpha_2}\bigr) x_a + \bigl(\tfrac{\beta_2}{\alpha_2}-\tfrac{\alpha_2}{\beta_2}\bigr) x_b \\[0.15cm] 0 & \tfrac{\beta_2}{\alpha_1}-\tfrac{\alpha_1}{\beta_2}
    \end{array}\!\!\right).
\end{align}

The type-C equation $C3_{(\delta_1;\,\delta_2;\,\delta_3)}$ is given in \eqref{c3ddd}.  The Lax matrix obtained from this equation may be written as \cite[Sec.~4.2.1]{KelsLax}
\begin{align}\label{LentriesC}
\LC(x;x_b,x_d;\al,\bt)=\frac{1}{D_{\LC}}\sum_{i=0}^2(L_{x^i}+\delta_1\Delta_{x^i})x^i,
\end{align}
where
\begin{gather}
\label{c3mata}
    \Lxa= \left(\!\!\begin{array}{cc}
    -\beta_2 & \beta_1 x_d \\ 0 & 0 
    \end{array}\!\!\right)\!,
\quad
    \Lxb=-\left(\!\!\begin{array}{cc}
    -\alpha_2 x_b & \frac{\beta_1 \beta_2}{\alpha_2} x_b x_d \\[0.1cm] \frac{\beta_1 \beta_2}{\alpha_2} & -\alpha_2 x_d
    \end{array}\!\!\right)\!,
\quad
    \Lxc= x_b\left(\!\!\begin{array}{cc}
    0 & 0 \\[0.1cm] \beta_1 & -\beta_2 x_d
    \end{array}\!\!\right)\!,
\\
    \Dxa=\frac{2 {\delta_3}\beta_1\beta_2}{\alpha_1}\left(\!\!\begin{array}{cc}
    0 & (\tfrac{\beta_2}{\alpha_2} - \tfrac{\alpha_2}{\beta_2})x_b \\[0.1cm] 0 & \tfrac{\beta_1}{\alpha_2} - \tfrac{\alpha_2}{\beta_1}
    \end{array}\!\!\right)\!,
\quad
    \Dxb=\frac{2\beta_1\beta_2}{\alpha_1}(\tfrac{\beta_2}{\beta_1}-\tfrac{\beta_1}{\beta_2})\left(\!\!\begin{array}{cc}
     {\delta_2} x_d & \tfrac{\alpha_1^2}{2\beta_1\beta_2} \\ 0 &  {\delta_3}x_b
    \end{array}\!\!\right)\!,
\\
\label{c3matb}
    \Dxc=\left(\!\!\begin{array}{cc}
    {\delta_2}(\tfrac{\beta_1}{\alpha_2}-\tfrac{\alpha_2}{\beta_1})(\tfrac{\alpha_2}{\beta_2} - \tfrac{\beta_2}{\alpha_2} + 2\tfrac{ \beta_2}{\alpha_1} x_b x_d)\beta_1 & 
    (\tfrac{\beta_1}{\alpha_2}-\tfrac{\alpha_2}{\beta_1})\bigl(\alpha_1 x_b - {\delta_2}(\tfrac{\alpha_2}{\beta_2} - \tfrac{\beta_2}{\alpha_2})\beta_2x_d\bigr) \\
    2 {\delta_2}(\tfrac{\beta_2}{\alpha_2} - \tfrac{\alpha_2}{\beta_2})\tfrac{\beta_1\beta_2}{\alpha_1}x_d & \alpha_1(\tfrac{\beta_2}{\alpha_2}-\tfrac{\alpha_2}{\beta_2})
    \end{array}\!\!\right)\!.
\end{gather}

\begin{prop}\label{prop:c3ddd}

For the case $\delta=0$ of the Lax matrix \eqref{LentriesA} defined with \eqref{a3mata}--\eqref{a3matb}, and the three cases $(\delta_1,\delta_2,\delta_3)=(0,0,0),(1,0,0),(\frac{1}{2},\frac{1}{2},0)$, of the Lax matrix \eqref{LentriesC} defined with \eqref{c3mata}--\eqref{c3matb}, and the normalisations chosen as
\begin{align}\label{normc3}
\begin{split}
    D_{\LA}&=4\ii (\alpha_1^2 - \beta_1^2)(\beta_1x_a-\alpha_2x)(\beta_2x-\alpha_2x_b)(\alpha_1\alpha_2\beta_1\beta_2)^{-1}, \\
    D_{\LC}&=(\alpha_2)^{-1},
\end{split}
\end{align}
the Lax equation \eqref{laxcomp2} is satisfied on solutions of $C3_{(\delta_1;\,\delta_3;\,\delta_2)}(\ccya;\ccf,\ccy,\cca,\ccb;\al,\gm)=0$.
\end{prop}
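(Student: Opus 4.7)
The plan is to combine the general CAFCC-to-Lax-pair derivation established earlier in this section with a direct matrix verification for the explicit entries. The conceptual backbone is already in place: equation \eqref{laxcomp2} reinterprets the type-C equation \eqref{laxeq2} centered at $\ccya$ as a compatibility condition for the four matrices $\LL, \M, \Mb, \Lb$ from \eqref{laxmatAC}, provided these matrices come from a CAFCC-compatible collection of equations on the face-centered cube of the top diagram of Figure \ref{fig:CAFCCcube}. Theorem \ref{thm:CAFCCproperty} guarantees consistency precisely for the rows of Table \ref{table-AC}; the three $C3$ cases with parameter settings $(\delta_1,\delta_2,\delta_3) = (0,0,0),(1,0,0),(\tfrac12,\tfrac12,0)$, paired with $A3_{(0)}$, appear there, so the abstract framework applies.

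First I would check that the parameter assignments in the proposition indeed correspond to rows of Table \ref{table-AC}, so the CAFCC conclusion of \S\ref{sec:Lax} is available. Next I would construct each of the four matrices in \eqref{laxmatAC} by specialising \eqref{LentriesA}--\eqref{a3matb} and \eqref{LentriesC}--\eqref{c3matb} to the variable orderings dictated by \eqref{laxiniteqs}: $\LL$ and $\Mb$ are two evaluations of the type-A Lax matrix from $A3_{(0)}$ centered at $\ccy$ and $\ccf$ respectively, while $\M$ and $\Lb$ come from the type-C matrix $\LC$ and its inverse $\LCi$ built from $C3_{(\delta_1;\,\delta_3;\,\delta_2)}$ centered at $\ccb$ and $\cca$. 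The only preparatory subtlety is computing $\Lb$ cleanly, which can be done via the standard $2\times 2$ inverse formula or, more economically, by solving the four-leg form \eqref{4legc} for the opposite corner variable so that $\LCi$ acquires essentially the same shape as $\LC$ with the roles of the corner variables exchanged.

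The main step is then to evaluate the $2\times 2$ matrix $\Lb\M - \Mb\LL$ and to show that each entry factors as a common multiplier times the type-C polynomial $C3_{(\delta_1;\,\delta_3;\,\delta_2)}(\ccya;\ccf,\ccy,\cca,\ccb;\al,\gm)$. Because both matrix pairs $\Lb\M$ and $\Mb\LL$ represent the same transition $\ccyb \to \cczb$ via the two CAFCC-consistent paths in Figure \ref{fig:LaxfigC}, the CAFCC property already ensures proportionality of the two products on solutions; the normalisations \eqref{normc3} are chosen precisely so that the proportionality constant is $1$ and the remaining difference is a scalar multiple of $C3_{(\delta_1;\,\delta_3;\,\delta_2)}$. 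Concretely this reduces to four polynomial identities in the variables $\ccya,\ccf,\ccy,\cca,\ccb$, each verifiable by symbolic computation modulo the ideal generated by the type-C polynomial.

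The main obstacle I anticipate is bookkeeping rather than conceptual: the three different $(\delta_1,\delta_2,\delta_3)$ cases each add distinct $\Delta_{x^i}$-type corrections into $\M$ and $\Lb$, and the inverse matrix $\Lb$ must be handled so that its denominator combines correctly with the normalisations \eqref{normc3}. A practical strategy is to verify the homogeneous case $(\delta_1,\delta_2,\delta_3)=(0,0,0)$ first, where the matrices reduce to the leading $L_{x^i}$ blocks, and then treat the $(1,0,0)$ and $(\tfrac12,\tfrac12,0)$ specialisations by isolating the $\delta_i$-linear corrections and checking that each collects into the corresponding $\delta$-dependent terms of $C3_{(\delta_1;\,\delta_3;\,\delta_2)}$ as given in \eqref{c3ddd}. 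No quadratic-in-$\delta$ cross-terms should appear, which serves as a useful internal consistency check.
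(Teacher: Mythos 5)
Your proposal is correct and follows essentially the same route as the paper: the paper's proof is precisely the direct computation you describe, exhibiting $\Lb\M-\Mb\LL$ explicitly as the scalar $C3_{(\delta_1;\,\delta_3;\,\delta_2)}(\ccya;\ccf,\ccy,\cca,\ccb;\al,\gm)$ times a rank-one (outer-product) matrix divided by a nonvanishing denominator, which makes the vanishing on solutions immediate. Your preliminary framing via Theorem \ref{thm:CAFCCproperty} and the case-by-case handling of the $\delta_i$ corrections are sensible organisational choices but do not change the substance of the verification.
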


\begin{proof}

Using the definitions \eqref{laxmatAC}, the Lax equation of Proposition \ref{prop:c3ddd} for the three cases $(\delta_1,\delta_2,\delta_3)=(0,0,0),(1,0,0),(\frac{1}{2},\frac{1}{2},0)$, may be written as
\begin{align}
\Lb\M-\Mb\LL=\frac{C3_{(\delta_1;\,\delta_3;\,\delta_2)}(\ccya;\ccf,\ccy,\cca,\ccb;\al,\gm)
\left(\!\!\begin{array}{c} 
    \gamma_1 \\ \frac{\beta_1}{\ccya}
    \end{array}\!\!\right)\otimes
    \left(\!\!\begin{array}{c} 
    -\gamma_2 \\ \beta_1\ccya 
    \end{array}\!\!\right)}
 {(\beta_1^2-\gamma_2^2)(\gamma_2 \ccy - \alpha_2 \ccya)(\alpha_2 \ccf - \gamma_1 \ccya)(\cca - \delta_1 \frac{\alpha_1}{\gamma_1\ccya} - \delta_2\frac{\gamma_1\ccya}{\alpha_1})}.
\end{align}

\end{proof}

\subsubsection{Compatibility equation for \texorpdfstring{$C2_{(\delta_1;\,\delta_3;\,\delta_2)}$}{C2(delta1;delta2;delta3)} and \texorpdfstring{$C1_{(1)}$}{C1(1)}}

In the following, $\theta_{ij}$ are defined as differences of parameters as follows
\begin{align}\label{thtdef1}
    \theta_{ij}=\theta_i-\theta_j, \qquad i,j\in\{1,2,3,4\},\qquad (\theta_1,\theta_2,\theta_3,\theta_4)=(\alpha_1,\alpha_2,\beta_1,\beta_2).
\end{align}

The type-A equation $A2_{(\hat{\delta}_1;\,\hat{\delta}_2)}$ is given in \eqref{a2dd}.  The parameters have been denoted as $\hat{\delta}_1$ and $\hat{\delta}_2$ to avoid confusion with the $\delta_i$ parameters for the type-C equation (which may take different values in the construction of the Lax equation).  The Lax matrix obtained from this equation is \eqref{LentriesA} with \cite[Sec.~4.1.2]{KelsLax}
\begin{gather}\label{a2mata}
    \Lxa=\left(\!\!\begin{array}{cc}
    \theta_{31} & \theta_{23}x_a -\theta_{24}x_b \\ 0 & \theta_{41}
    \end{array}\!\!\right)\!,
\quad
    \Lxb=\left(\!\!\begin{array}{cc}
    x_a \theta_{12} + x_b(\theta_{13}+\theta_{24}) & \theta_{34} x_a x_b \\ \theta_{34} & x_b \theta_{12} + x_a(\theta_{13}+\theta_{24})
    \end{array}\!\!\right)\!,
\\
    \Lxc=x_a x_b\left(\!\!\begin{array}{cc}
     \theta_{41} & 0 \\ \tfrac{1}{x_a} \theta_{23} - \tfrac{1}{x_b} \theta_{24} & \theta_{31}
    \end{array}\!\!\right)\!,
\qquad 
    \Dxa=\left(\!\!\begin{array}{cc}
    0 & {\hat{\delta}_2} \theta_{12} \theta_{34}(\theta_{13} + \theta_{24}) \\ 0 & 0
    \end{array}\!\!\right)\!,
\\
\Dxb=\left(\!\!\begin{array}{cc}
{\hat{\delta}_2}\bigl(2 \theta_{13}\theta_{24}^2 + \theta_{12}\theta_{34}(\theta_{12}-\theta_{34})\bigr) & (\Dxb)_{12} \\
0 & {\hat{\delta}_2}\bigl(2 \theta_{14}\theta_{23}^2-\theta_{12}\theta_{34}(\theta_{12}+\theta_{34})\bigr)
\end{array}\!\!\right)\!,
\\
\label{a2matb}
\Dxc=\left(\!\!\begin{array}{cc}
\theta_{14} \theta_{23} \theta_{24}(x_b - \theta_{12} \theta_{34} - \theta_{24}^2)^{\hat{\delta}_2} - {\hat{\delta}_2} \theta_{14}(x_a \theta_{12} \theta_{24} - x_b \theta_{23} \theta_{13} ) & (\Dxc)_{12} \\
{\hat{\delta}_2} \theta_{23} \theta_{24} \theta_{43} & (\Dxc)_{22}
\end{array}\!\!\right)\!,
\end{gather}
where
\begin{align}
\begin{split}
(\Dxb)_{12}=&\,\theta_{12} \theta_{34}(\theta_{13} + \theta_{24})(x_a+x_b-\theta_{13}^2-\theta_{24}^2-\theta_{12}\theta_{34})^{\hat{\delta}_2} -2{\hat{\delta}_2}\theta_{13} \theta_{14}(x_a \theta_{24} + x_b \theta_{32}), \\
(\Dxc)_{12}=&\, \theta_{13} \theta_{14} \Bigl(x_b \theta_{23}(\theta_{12} \theta_{43}- \theta_{13}^2)^{\hat{\delta}_2} - x_a \theta_{24}(\theta_{12} \theta_{34} - \theta_{14}^2)^{\hat{\delta}_2} \\
    &\phantom{\theta_{13} \theta_{14} \Bigl(} - \hat{\delta}_2 \theta_{34} \bigl(x_a x_b-(\theta_{13}+\theta_{24})\theta_{12}\theta_{23}\theta_{24}\bigr)\!\Bigr), \\ 
(\Dxc)_{22}=&\, \theta_{13} \theta_{23} \theta_{24}(x_a + \theta_{12} \theta_{34} - \theta_{23}^2)^{\hat{\delta}_2} + {\hat{\delta}_2} \theta_{13}\bigl(x_a \theta_{14} \theta_{24} - x_b \theta_{12} \theta_{23}\bigr).
\end{split}
\end{align}

The type-C equation $C2_{(\delta_1;\,\delta_2;\,\delta_3)}$ is given in \eqref{c2ddd}.  The Lax matrix obtained from this equation is \eqref{LentriesC} with  \cite[Sec.~4.2.2]{KelsLax}
\begin{gather}\label{c2mata}
    \Lxa=\left(\!\!\begin{array}{cc}
    1 & -\theta_{34}-x_d \\ 0 & 0 
    \end{array}\!\!\right)\!,
\;\;
    \Lxb=-\left(\!\!\begin{array}{cc}
    x_b & x_b (\theta_{23} + \theta_{24} - x_d) \\ -1 & \theta_{23} + \theta_{24} + x_d
    \end{array}\!\!\right)\!,
\;\;
    \Lxc=x_b\left(\!\!\begin{array}{cc}
    0 & 0 \\ -1 & x_d - \theta_{34}
    \end{array}\!\!\right)\!,
\\
    \Dxa=\left(\!\!\begin{array}{cc}
    -2({\delta_2}+{\delta_3}) & 2 {\delta_2}(x_d+\theta_{34}) + {\delta_3}\bigl(2(x_d - x_b \theta_{24}) +\theta_{34}(\theta_{13}+\theta_{14}+1)\bigr) \\ 0 & -2 {\delta_3} \theta_{23}
    \end{array}\!\!\right)\!,
\\
\Dxb=\left(\!\!\begin{array}{cc}
-\theta_{34}\bigl(2(x_d-\theta_{13})\bigr)^{\delta_2}(-1)^{\delta_3} + ({\delta_2}+{\delta_3})2 x_b + {\delta_2}(\theta_{23}^2+\theta_{24}^2) & (\Dxb)_{12} \\
-2({\delta_2}+{\delta_3}) & (\Dxb)_{22}
\end{array}\!\!\right)\!,
\\
\label{c2matb}
\Dxc=\left(\!\!\begin{array}{cc}
(\Dxc)_{11} & (\Dxc)_{12} \\
\theta_{42}(2 x_d -\theta_{12}-\theta_{14})^{\delta_2}(-1)^{\delta_3} + ({\delta_2}+{\delta_3})2 x_b & (\Dxc)_{22}
\end{array}\!\!\right)\!,
\end{gather}
where
\begin{align}
\begin{split}
(\Dxb)_{12}=&\, (\theta_{43} + {\delta_3} x_b)\bigl(x_d(\theta_{31}+\theta_{41})^{\delta_2}(-1)^{\delta_3} +2 {\delta_3} \theta_{12}^2 + (\theta_{31}^{1+{\delta_2}}+\theta_{41}^{1+{\delta_2}}) (\theta_{32}+\theta_{42})^{\delta_3}\bigr) \\
  &+ 2{\delta_2}\bigl(x_b(\theta_{23} - x_d) + \theta_{24}(x_b - x_d \theta_{23}) \bigr) -  {\delta_3}x_b\bigl(x_d + 2 \theta_{12}^2 - (\theta_{23} + \theta_{24})\bigr) , \\
(\Dxb)_{22}=&\, 2({\delta_2}+{\delta_3})x_d - 2{\delta_3} x_b \theta_{34} + (2 {\delta_2}+{\delta_3})(\theta_{23}+\theta_{24})(\theta_{13}+\theta_{14}+1)^{\delta_3} \\
(\Dxc)_{11}=&\, \theta_{32}\Bigl(\theta_{24}\bigl(\theta_{34}(\theta_{12}+\theta_{14}-2 x_d) +\theta_{24}^2 -x_b\bigr)^{\delta_2} + x_b(2 x_d -\theta_{13}-\theta_{14})^{\delta_2}\Bigr)(-1)^{\delta_3}, \\
(\Dxc)_{12}=&\, \theta_{32}\biggl(\theta_{24}\bigl(\theta_{34}( \theta_{23} \theta_{24} -2 \theta_{13} \theta_{14})^{\delta_2}(\theta_{31}+\theta_{41})^{\delta_3} - x_d(2 \theta_{13}\theta_{43} + \theta_{23}\theta_{24})^{\delta_2} (-1)^{\delta_3}\bigr) \\[-0.15cm]
 & + x_b\Bigl((\theta_{31}+\theta_{41})^{1+{\delta_2}+{\delta_3}} -{\delta_2}\theta_{24}\theta_{34} - ({\delta_2} +{\delta_3}) 2 \theta_{13}\theta_{14} + x_d(\theta_{21}+\theta_{31})^{\delta_2} (-1)^{\delta_3}\Bigr)\! \biggr), \\[-0.1cm]
(\Dxc)_{22}=&\, \theta_{24}\bigl(-(\theta_{41}^{1+{\delta_2}+{\delta_3}}+\theta_{31}^{1+{\delta_2}+{\delta_3}})  - x_d(2 \theta_{31}-\theta_{24})^{\delta_2}(-1)^{\delta_3} - {\delta_2} \theta_{23}\theta_{34} \bigr) \\
& - (2 {\delta_2} + {\delta_3})(x_d-\theta_{34})x_b - {\delta_3}\bigl(x_d - \theta_{34}(\theta_{13}+\theta_{14})\bigr)x_b.
\end{split}
\end{align}

\begin{prop}\label{prop:c2ddd}

For the $(\hat{\delta}_1,\hat{\delta}_2)=(1,0)$ case of the Lax matrix \eqref{LentriesA} defined with \eqref{a2mata}--\eqref{a2matb}, and the three cases $(\delta_1,\delta_2,\delta_3)=(0,0,0),(1,0,0),(1,1,0)$ of the Lax matrix \eqref{LentriesC} defined with \eqref{c2mata}--\eqref{c2matb}, and the normalisations chosen as
\begin{align}\label{normc2}
\begin{split}
    D_{\LA}&=\ii (\alpha_1 - \beta_1)(\beta_1+x_a-\alpha_2-x)(\beta_2+x-\alpha_2-x_b), \\
    D_{\LC}&=1,
\end{split}
\end{align}
the Lax equation \eqref{laxcomp2} is satisfied on solutions of $C2_{(\delta_1;\,\delta_3;\,\delta_2)}(\ccya;\ccf,\ccy,\cca,\ccb;\al,\gm)=0$ if $(\delta_1,\delta_2,\delta_3)=(1,0,0)\textrm{ or }(1,1,0)$, and on solutions of $C1_{(1)}(\ccya;\ccf,\ccy,\cca,\ccb;\al,\gm)=0$ if $(\delta_1,\delta_2,\delta_3)=(0,0,0)$.
\end{prop}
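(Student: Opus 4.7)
The plan is to verify Proposition \ref{prop:c2ddd} by direct computation in the same spirit as the preceding Proposition \ref{prop:c3ddd}. First I would substitute the specified parameters into the Lax matrices: take the type-A Lax matrix $\LA$ from \eqref{LentriesA} specialized to $(\hat{\delta}_1,\hat{\delta}_2)=(1,0)$, which kills every term carrying the $\hat{\delta}_2$ coefficient in \eqref{a2mata}--\eqref{a2matb}, and take the type-C Lax matrix $\LC$ from \eqref{LentriesC} specialized in turn to each of the three cases $(\delta_1,\delta_2,\delta_3)=(0,0,0),(1,0,0),(1,1,0)$ in \eqref{c2mata}--\eqref{c2matb}. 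With the normalisations \eqref{normc2} in place, these are then plugged into the four matrices $\LL,\M,\Mb,\Lb$ defined by \eqref{laxmatAC}, with the correct parameter identifications (for instance $\al\mapsto(\ccqa,\ccrb)$, $\bt\mapsto(\ccpb,\ccra)$ for $\LL$, and the analogous shifts for $\M$, $\Mb$, and $\Lb$ coming from \eqref{laxiniteqs}).

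Next I would compute the combination $\Lb\M-\Mb\LL$ directly. By analogy with Proposition \ref{prop:c3ddd}, the expected structure of the answer is
\[
\Lb\M-\Mb\LL \;=\; \frac{E_{(\delta_1,\delta_2,\delta_3)}(\ccya;\ccf,\ccy,\cca,\ccb;\al,\gm)}{N(\ccf,\ccy,\cca,\ccb,\ccya;\al,\gm)}\; u\otimes v,
\]
where $u$ and $v$ are explicit two-component vectors depending only on $\ccya$ and the parameters, $N$ is the product of linear factors arising from the two normalisation denominators $D_{\LA}$ and $D_{\LC}$ used in constructing the four matrices, and the scalar $E_{(\delta_1,\delta_2,\delta_3)}$ is to be identified, case by case, with the appropriate type-C polynomial from Appendix \ref{app:equations}.

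The main obstacle is purely the size of the computation: the four matrices are quadratic polynomials in the face variables with coefficients that are themselves rational in the four corner variables and six parameter components, so the two products $\Lb\M$ and $\Mb\LL$ each expand into lengthy expressions before the compensating cancellations that produce the rank-one factor become visible. This is most conveniently handled by computer algebra. A secondary point is the case split: for $(\delta_1,\delta_2,\delta_3)=(1,0,0)$ and $(1,1,0)$ the scalar $E$ should match $C2_{(\delta_1;\,\delta_3;\,\delta_2)}$ as given by \eqref{c2ddd}, whereas for $(\delta_1,\delta_2,\delta_3)=(0,0,0)$ the $\LC$ matrix degenerates and $E$ should instead match $C1_{(1)}$; each of the three cases therefore has to be matched separately against the explicit equation in the appendix.

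Once the factorisation above is established in each case, Proposition \ref{prop:c2ddd} follows at once: the scalar prefactor vanishes precisely on solutions of the stated type-C equation, so $\Lb\M-\Mb\LL\doteq 0$ on those solutions, which is the content of the compatibility equation \eqref{laxcomp2}.
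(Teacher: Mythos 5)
Your proposal is correct and follows essentially the same route as the paper: the paper's proof is precisely the direct computation you describe, exhibiting $\Lb\M-\Mb\LL$ as a rank-one matrix $u\otimes v$ (with $u,v$ depending only on $\ccya$ and the parameters) multiplied by the scalar $\tilde{C}/N$, where $\tilde{C}$ is $C2_{(\delta_1;\,\delta_3;\,\delta_2)}$ or $-C1_{(1)}$ according to the case and $N$ collects the linear factors from the normalisations. No gap; the only caveat is that the verification is a (computer-algebra) calculation, which is exactly how the paper presents it.
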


\begin{proof}

Using the definitions \eqref{laxmatAC}, the Lax equation of Proposition \ref{prop:c2ddd} for the three cases $(\delta_1,\delta_2,\delta_3)=(0,0,0),(1,0,0),(1,0,1)$, may be written as
\begin{align}
\Lb\M-\Mb\LL=\frac{\frac{\tilde{C}}{2(\beta_1-\gamma_2)}
\left(\!\!\begin{array}{c} 
    \beta_1-\gamma_1-\ccya \\ -1
    \end{array}\!\!\right)\otimes
    \left(\!\!\begin{array}{c} 
    1 \\ \gamma_2-\beta_1-\ccya  
    \end{array}\!\!\right)}
 {(\gamma_2 + \ccy - \alpha_2 - \ccya)(\alpha_2 + \ccf - \gamma_1 - \ccya)(\cca - \delta_1 (\alpha_1-\gamma_1-\ccya)^{1+\delta_2}\bigr)},
\end{align}
where 
\begin{align}
\tilde{C}=
    \left\{\begin{array}{ll}
    C2_{(\delta_1;\,\delta_3;\,\delta_2)}(\ccya;\ccf,\ccy,\cca,\ccb;\al,\gm), & (\delta_1,\delta_2,\delta_3)=(1,0,0)\textrm{ or }(1,1,0), \\[0.1cm]
    -C1_{(1)}(\ccya;\ccf,\ccy,\cca,\ccb;\al,\gm), & (\delta_1,\delta_2,\delta_3)=(0,0,0).
    \end{array}\right.
\end{align}

\end{proof}

\subsubsection{Compatibility equation for \texorpdfstring{$C2_{(0;\,0;\,0)}$}{C2(0;0;0)} and \texorpdfstring{$C1_{(0)}$}{C1(0)}}

The Lax matrix obtained from the type-C equation $C1_{(\delta)}$ given in \eqref{c1d} is (this matrix is slightly different from\cite[Sec.~4.2.3]{KelsLax} due to a slightly different expression for $C1_{(\delta=0)}$ given in \eqref{c1d}, while the $\delta=1$ case is new)
\begin{align}\label{c1mat}
\begin{split}
   \LC=\frac{(-1)^{\delta}}{D_{\LC}}\left(\!\!\begin{array}{cc} \bigl(x+\delta(\alpha_2-\beta_1) \bigr)\bigl(x - x_b+\delta(\beta_2-\alpha_2)\bigr) & L_{12} \\ 
   x - x_b+\delta(\beta_2-\alpha_2) & 2(\alpha_2-\beta_2)(-\frac{xd}{2})^\delta + (x_b - x) x_d \end{array}\!\!\right)\!,
\end{split}
\end{align}
where
\begin{align}
L_{12}=2\bigl((\beta_1-\beta_2)x +(\alpha_2-\beta_1)x_b\bigr)(-\tfrac{x_d}{2})^{\delta} + \bigl(\delta(\alpha_2-\beta_1)(\alpha_2-\beta_2) - x (x - x_b)\bigr) x_d.
\end{align}

\begin{prop}\label{prop:c1d}

For the $(\hat{\delta}_1,\hat{\delta}_2)=(0,0)$ case of the Lax matrix \eqref{LentriesA} defined with \eqref{a2mata}--\eqref{a2matb}, and the Lax matrix \eqref{c1mat}, with the normalisations chosen as
\begin{align}\label{normc1}
\begin{split}
    D_{\LA}&=(\alpha_1 - \beta_1)(x-x_a)(x-x_b), \\
    D_{\LC}&=x-x_b,
\end{split}
\end{align}
the Lax equation \eqref{laxcomp2} is satisfied on solutions of $C2_{(0;\,0;\,0)}(\ccya;\ccf,\ccy,\cca,\ccb;\al,\gm)=0$ if $\delta=1$ in \eqref{c1mat}, and on solutions of $C1_{(0)}(\ccya;\ccf,\ccy,\cca,\ccb;\al,\gm)=0$ if $\delta=0$ in \eqref{c1mat}.
\end{prop}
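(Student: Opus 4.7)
The plan is to follow the same template that was used for Propositions \ref{prop:c3ddd} and \ref{prop:c2ddd}: substitute the explicit forms of the four Lax matrices into the left-hand side of the compatibility equation \eqref{laxcomp2}, perform the matrix multiplication and subtraction, and then exhibit the resulting $2\times 2$ matrix as a rank-one outer product whose scalar prefactor is proportional to either $C2_{(0;\,0;\,0)}$ or $C1_{(0)}$ depending on the choice of $\delta$ in \eqref{c1mat}.

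First I would write out the four matrices from \eqref{laxmatAC} with the specific data of Proposition \ref{prop:c1d}.  For $\LL$ and $\Mb$ this means inserting the $(\hat{\delta}_1,\hat{\delta}_2)=(0,0)$ case of the type-A Lax matrix \eqref{LentriesA} built from \eqref{a2mata}--\eqref{a2matb}, at the two parameter shifts $((\beta_1,\gamma_2),(\alpha_2,\gamma_1))$ and $((\beta_1,\gamma_1),(\gamma_2,\alpha_2))$ respectively, together with the normalization $D_{\LA}$ from \eqref{normc1}.  For $\M$ and $\Lb$ I would insert \eqref{c1mat} at the shifts $((\alpha_1,\gamma_1),(\beta_1,\gamma_2))$ and, for $\Lb$, invert \eqref{c1mat} as required by \eqref{Lentries3}; since $\LC$ in \eqref{c1mat} has an especially simple $2\times 2$ form, its inverse can be written down in closed form directly, with the $\det \LC$ contributing to a rational prefactor that cleans up neatly against $D_{\LC}=x-x_b$ from \eqref{normc1}.

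Next I would compute $\Lb\M - \Mb\LL$.  Guided by the factorizations produced in the proofs of Propositions \ref{prop:c3ddd} and \ref{prop:c2ddd}, one expects that after clearing the common denominators coming from $D_{\LA}$ and $D_{\LC}$ the result takes the form
\begin{align}
\Lb\M - \Mb\LL = \frac{\tilde{C}\cdot u(\ccya,\al,\gm)\otimes v(\ccya,\al,\gm)}{N(\ccy,\ccf,\cca,\ccb;\al,\gm)},
\end{align}
for some explicit vectors $u,v$ depending only on $\ccya$ and the parameters, an explicit scalar denominator $N$ built from the leg functions at the four corner vertices $\ccy,\ccf,\cca,\ccb$, and a scalar $\tilde{C}$ which, when the normalizations \eqref{normc1} are used, equals (up to an overall sign) $C2_{(0;\,0;\,0)}(\ccya;\ccf,\ccy,\cca,\ccb;\al,\gm)$ when $\delta=1$ in \eqref{c1mat} and $C1_{(0)}(\ccya;\ccf,\ccy,\cca,\ccb;\al,\gm)$ when $\delta=0$.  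The two claims of the proposition then follow immediately, since $\tilde{C}=0$ on the solutions of the corresponding face-centered quad equation centered at $\ccya$.

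The main obstacle is purely computational: the entries of $\Lb\M$ and $\Mb\LL$ are rational functions in the seven variables $\ccya,\ccy,\ccf,\cca,\ccb,\x,\x_a$-style arguments and the six parameter components $\alpha_1,\alpha_2,\beta_1,\gamma_1,\gamma_2$, and the necessary polynomial identity hides inside substantial cancellation between the type-A contribution (which contains three $x^i$ terms from \eqref{LentriesA}) and the type-C contribution (given compactly by \eqref{c1mat}).  This is best handled by computer algebra, exactly as in the verifications of Propositions \ref{prop:c3ddd} and \ref{prop:c2ddd}.  A useful sanity check before the full expansion is to verify the rank-one structure by confirming that the two rows of $\Lb\M - \Mb\LL$ are proportional modulo the target equation, which pins down the vectors $u$ and $v$ and reduces the final verification to a single polynomial identity, whose coefficient in the appropriate monomial basis in $\ccya$ must reproduce the multilinear polynomial $C2_{(0;\,0;\,0)}$ or $C1_{(0)}$ given in Appendix \ref{app:equations}.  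The fact that the shift pattern $(\hat{\delta}_1,\hat{\delta}_2)=(0,0)$ for the A-leg matches the $\delta=0,1$ cases of the C-leg matrix in \eqref{c1mat} is exactly the leg-function matching prescribed by Theorem \ref{thm:CAFCCproperty}, so the output equation is forced to be a type-C equation of the list in Table \ref{table-AC}, leaving only the verification that the correct representative ($C2_{(0;\,0;\,0)}$ or $C1_{(0)}$) arises for each choice of $\delta$.
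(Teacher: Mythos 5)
Your proposal is correct and follows essentially the same route as the paper: the paper's proof is precisely the direct (computer-algebra) computation of $\Lb\M-\Mb\LL$, exhibited as a rank-one outer product over an explicit denominator, with the scalar prefactor equal to $C1_{(0)}(\ccya;\ccf,\ccy,\cca,\ccb;\al,\gm)$ for $\delta=0$ and $\frac{1}{\ccya}C2_{(0;\,0;\,0)}(\ccya;\ccf,\ccy,\cca,\ccb;\al,\gm)$ for $\delta=1$, so that it vanishes on solutions of the corresponding type-C equation. Your anticipated factorized form, parameter assignments from \eqref{laxmatAC}, and rank-one sanity check all match what the paper actually does.
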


\begin{proof}

Using the definitions \eqref{laxmatAC}, the Lax equation of Proposition \ref{prop:c1d} 
may be written as
\begin{align}
\Lb\M-\Mb\LL=\frac{\tilde{C}}
 {2(\beta_1-\gamma_2)(\ccy - \ccya)(\ccf - \ccya)}
 \left(\!\!\begin{array}{c} 
    \ccya \\ 1
    \end{array}\!\!\right)\otimes
    \left(\!\!\begin{array}{c} 
    1 \\ -\ccya  
    \end{array}\!\!\right),
\end{align}
where
\begin{align}
\tilde{C}=
    \left\{\begin{array}{ll}
    C1_{(0)}(\ccya;\ccf,\ccy,\cca,\ccb;\al,\gm), & \delta=0, \\[0.1cm]
    \frac{1}{\ccya}C2_{(0;\,0;\,0)}(\ccya;\ccf,\ccy,\cca,\ccb;\al,\gm), & \delta=1.
    \end{array}\right.
\end{align}

\end{proof}

\section{Conclusion}

Consistency-around-a-face-centered-cube (CAFCC) has recently been introduced as a new form of multidimensional consistency for integrability of systems of five-point face-centered quad equations in the square lattice \cite{Kels:2020zjn}.  This paper introduces a new formulation of CAFCC which involves so-called type-C equations that are centered at faces of the face-centered cube, whereas previously they were only centered at corners.  This is essential for consistently extending systems of type-C equations into higher-dimensional lattices as well as for deriving their Lax pairs, as has been done in \S\ref{sec:Lax}.  The establishing of the Lax pairs and multidimensional consistency of type-C equations in this paper, along with the previous establishing of their vanishing algebraic entropy \cite{GubbiottiKels}, shows that systems of type-C CAFCC equations (as well as type-A and -B equations) satisfy the important properties that are associated with integrability of lattice equations.

An important problem will be to determine whether there are other equations that satisfy CAFCC, for which the results here (and \cite{KelsLax}) would imply their Lax pairs.  This might be done by establishing a classification result similar to \cite{ABS}, which has not yet been explored for multilinear face-centered quad equations of the form \eqref{afflin}.  It is also worth investigating potential integrability of the different equations that arise on adjacent cubes in higher-dimensional lattices, as illustrated in \S\ref{sec:FCQECAFCC}\ref{sec:multicube}.  Besides providing new examples of integrable systems, this could also potentially lead to other interesting forms of consistency connected to some different forms of the Yang-Baxter equation.  It would also be interesting if there is some relation between CAFCC and the recently proposed consistency-around-a-cuboctahedron property \cite{JoshiNakazonoCUBO}, which also involves some face-centered cubic structure, but applies to regular quad equations rather than face-centered quad equations. This could also inspire new reductions to discrete Painlev\'e equations \cite{JoshiNakazonoCUBOReduction}.

\begin{appendices}
\numberwithin{equation}{section}

\section{Type-A and type-C CAFCC equations}\label{app:equations}

Here the definition \eqref{thtdef1} is used, {\it i.e.}, $\theta_{ij}=\theta_i-\theta_j$, where $(\theta_1,\theta_2,\theta_3,\theta_4)=(\alpha_1,\alpha_2,\beta_1,\beta_2)$.

\begin{multline}\label{a3d}
A3_{(\delta)}=
\tfrac{\delta}{4}(\tfrac{\alpha_1}{\alpha_2}-\tfrac{\alpha_2}{\alpha_1})(\tfrac{\beta_1}{\beta_2}-\tfrac{\beta_2}{\beta_1})\bigl(\tfrac{\beta_1\beta_2}{\alpha_1\alpha_2}-\tfrac{\alpha_1\alpha_2}{\beta_1\beta_2}\bigr)x \\ 
\begin{split}
 +\Bigl((\tfrac{\beta_1}{\beta_2}-\tfrac{\beta_2}{\beta_1})(x_ax_b-x_cx_d) + (\tfrac{\alpha_1}{\alpha_2}-\tfrac{\alpha_2}{\alpha_1})(x_ax_c-x_bx_d) - (\tfrac{\alpha_1\alpha_2}{\beta_1\beta_2}-\tfrac{\beta_1\beta_2}{\alpha_1\alpha_2})(x_ax_d-x_bx_c)\Bigr)x  \\
+(\tfrac{\alpha_2}{\beta_1}-\tfrac{\beta_1}{\alpha_2})(x_a x^2 - x_bx_cx_d) - (\tfrac{\alpha_2}{\beta_2}-\tfrac{\beta_2}{\alpha_2})(x_b x^2 - x_ax_cx_d)   
 - (\tfrac{\alpha_1}{\beta_1}-\tfrac{\beta_1}{\alpha_1})(x_c x^2 - x_ax_bx_d)   \\
  + (\tfrac{\alpha_1}{\beta_2}-\tfrac{\beta_2}{\alpha_1})(x_d x^2 - x_ax_bx_c) 
 + \tfrac{\delta}{4}\Bigl((\tfrac{\alpha_1}{\beta_1}-\tfrac{\beta_1}{\alpha_1})(\tfrac{\alpha_2}{\beta_2}-\tfrac{\beta_2}{\alpha_2})\bigl((\tfrac{\alpha_1}{\beta_2}-\tfrac{\beta_2}{\alpha_1})x_a + (\tfrac{\alpha_2}{\beta_1}-\tfrac{\beta_1}{\alpha_2})x_d\bigr)
 \\
 \phantom{+\delta\Bigl(} - (\tfrac{\alpha_1}{\beta_2}-\tfrac{\beta_2}{\alpha_1})(\tfrac{\alpha_2}{\beta_1}-\tfrac{\beta_1}{\alpha_2})\bigl((\tfrac{\alpha_1}{\beta_1}-\tfrac{\beta_1}{\alpha_1})x_b + (\tfrac{\alpha_2}{\beta_2}-\tfrac{\beta_2}{\alpha_2})x_c\bigr) 
\Bigr) =0.
\end{split}
\end{multline}

\begin{multline}\label{a2dd}
 A2_{(\delta_1;\,\delta_2)}=\Bigl( (\theta_{13}+\theta_{24})(x_bx_c-x_ax_d)  + \theta_{12}(x_ax_c-x_bx_d) + \theta_{34}(x_ax_b-x_cx_d) \Bigr)x \\ 
 \begin{split}
 + \theta_{23}(x_ax^2 -x_bx_cx_d) - \theta_{24}(x_bx^2 -x_ax_cx_d) - \theta_{13}(x_cx^2 -x_ax_bx_d)   
+ \theta_{14}(x_dx^2 -x_ax_bx_c) \\ + {\delta_1}x\theta_{12}\theta_{34}(\theta_{13}+\theta_{24})\bigl(x+x_a+x_b+x_c+x_d - \theta_{12}^2-\theta_{13}\theta_{23}-\theta_{14}\theta_{24}\bigr)^{\delta_2} \\
 + {\delta_1}\Bigl(\theta_{14}\theta_{23}(\theta_{13}x_b+\theta_{24}x_c)(2x-\theta_{12}\theta_{34})^{\delta_2} - \theta_{13}\theta_{24}(\theta_{13}x_a+\theta_{23}x_d)(2x+\theta_{12}\theta_{34})^{\delta_2}
  \Bigr)  \\
 + {\delta_2}\Bigl( x_a \theta_{13}\theta_{14}(\theta_{24}\theta_{14}^2-\theta_{34}x_b) - x_b\theta_{13}\theta_{23}(\theta_{14}\theta_{13}^2-\theta_{12}x_d) - x_c\theta_{14}\theta_{24}(\theta_{23}\theta_{24}^2+\theta_{12}x_a)     \\
 \phantom{+} + x_d\theta_{23}\theta_{24}(\theta_{13}\theta_{23}^2 +\theta_{34}x_c) +
     \bigl(x_a x_d\theta_{13}\theta_{42} + x_b x_c\theta_{23}\theta_{14} + {\textstyle \prod_{1\leq i<j\leq4}}\theta_{ij}\bigr)(\theta_{13}+\theta_{24})\!\Bigr)\! =0.
\end{split}
\end{multline}

\begin{multline}\label{c3ddd}
C3_{(\delta_1;\,\delta_2;\,\delta_3)}=
\Bigl(\alpha_2 (\beta_1 x_d - \beta_2 x_c) -{\delta_3}\bigl(\alpha_2^2 (\beta_1 x_b-\beta_2 x_a) + \beta_1 \beta_2(\beta_1 x_a - \beta_2 x_b)\bigr)\alpha_1^{-1}\Bigr)x^2 \\ 
\begin{split}
 + \alpha_2 x_a x_b (\beta_2 x_d-\beta_1 x_c) + {\delta_1}\alpha_1\bigl(\beta_1 x_b-\beta_2 x_a + \alpha_2^2 (\tfrac{x_a}{\beta_2} - \tfrac{x_b}{\beta_1})\bigr)  +\Bigl(\alpha_2^2 (x_b x_c - x_a x_d)  \\
 + \beta_1 \beta_2 (x_a x_c - x_b x_d) + \alpha_2(\tfrac{\beta_2}{\beta_1}-\tfrac{\beta_1}{\beta_2})\bigl({\delta_1} \alpha_1 - {\delta_3}\tfrac{\beta_1 \beta_2}{\alpha_1}x_a x_b  +  {\delta_2}\tfrac{\beta_1 \beta_2}{\alpha_1} x_c x_d\bigr)\Bigr)x \\
 + {\delta_2}\Bigl(\tfrac{(\alpha_2^2 - \beta_1^2)(\alpha_2^2 - \beta_2^2)}{2\alpha_2\beta_1\beta_2}(\beta_2 x_d-\beta_1 x_c) + \tfrac{x_c x_d}{\alpha_1}\bigl(\beta_1 \beta_2 (\beta_1 x_b-\beta_2 x_a) + \alpha_2^2 (\beta_1 x_a - \beta_2 x_b)\bigr)\Bigr)=0.
    \end{split}
\end{multline}


\begin{multline}\label{c2ddd}
C2_{(\delta_1;\,\delta_2;\,\delta_3)}=
(x_d -x_c)(x^2 + x_a x_b) + \theta_{34} (x^2 - x_a x_b)(\theta_{13}+\theta_{14})^{\delta_3} +2 {\delta_3}(\theta_{23} x_a - \theta_{24} x_b)x^2 \\
    \begin{split}
 + \Bigl((x_a + x_b + 2 {\delta_2} \theta_{23}\theta_{24})(x_c - x_d) - (x_a-x_b)(\theta_{23}+\theta_{24})(\theta_{13}+\theta_{14})^{\delta_3} + 2 {\delta_3} \theta_{34} x_a x_b \Bigr)x \\
 + {\delta_1}
\Bigl(\theta_{13}^{1+{\delta_2}+{\delta_3}}+\theta_{14}^{1+{\delta_2}+{\delta_3}} + 2 {\delta_2} x_c x_d -(x_c+x_d)(\theta_{13}+\theta_{14})^{\delta_2}\Bigr)
\bigl( \theta_{34}(\delta_2\theta_{23}\theta_{24}-x)-x_b \theta_{23} \\
+x_a \theta_{24}\bigr) +\delta_1 \theta_{23}\theta_{24}\bigl(x_c-x_d+\theta_{34}(\theta_{13}+\theta_{14}- 2 x)^{{\delta_3}}\bigr)(x_a+x_b-\theta_{34}^2-\theta_{23}\theta_{24})^{\delta_2} =0.
    \end{split}
\end{multline}





 \begin{multline}\label{c1d}
 C1_{(\delta)}=
 (x_c-x_d)x^2 +
 \Bigl(2(\beta_1-\beta_2)\bigl(-\frac{x_c+x_d}{2}\bigr)^{\delta} -(x_a+x_b)(x_c-x_d)\Bigr)x + 2\bigl(-\frac{x_c+x_d}{2}\bigr)^{\delta}  \\
  \times\bigl((\beta_2-\alpha_2)x_a +(\alpha_2-\beta_1)x_b\bigr)+ \bigl(x_ax_b- \delta (\alpha_2-\beta_1)(\alpha_2-\beta_2)\bigr)(x_c-x_d) =0.
\end{multline}


\subsection*{Four-leg expressions}

In Table \ref{table-BC2}, the abbreviation {\it add.} indicates an additive form of one of the equations \eqref{4leg} or \eqref{4legc}, given respectively by
\begin{align}
    a(x;x_a;\alpha_2,\beta_1)+a(x;x_d;\alpha_1,\beta_2)-a(x;x_b;\alpha_2,\beta_2)-a(x;x_c;\alpha_1,\beta_1)=0, \\
    a(x;x_a;\alpha_2,\beta_1)+c(x;x_d;\alpha_1,\beta_2)-a(x;x_b;\alpha_2,\beta_2)-c(x;x_c;\alpha_1,\beta_1)=0.
\end{align}
%


 
 
 















\begin{table}[htb!]
\centering
\includegraphics{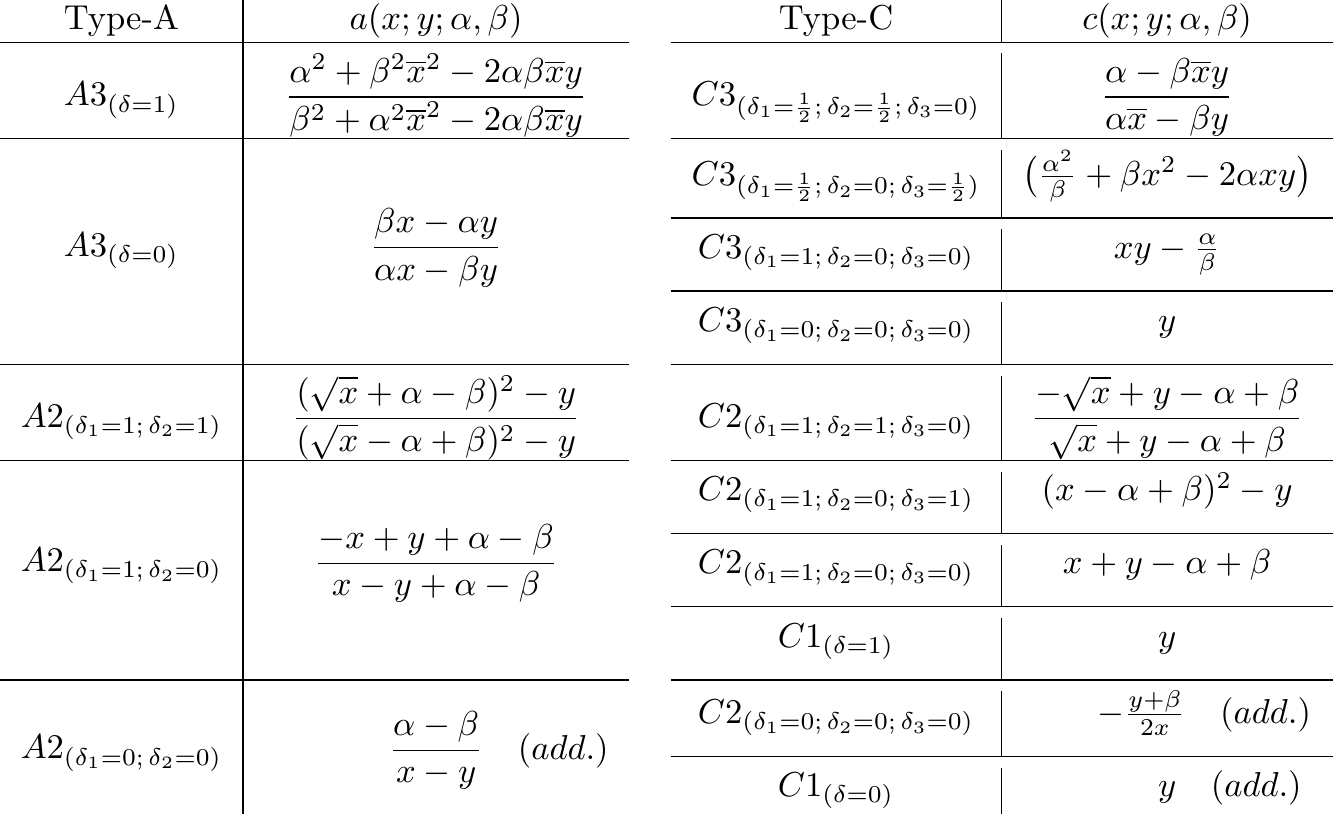}
\caption{Left: 
A list of $a(x;y;\alpha,\beta)$ in \eqref{4leg} for type-A equations \eqref{a3d} and \eqref{a2dd}. Right: A list of $c(x;y;\alpha,\beta)$ in \eqref{4legc} for type-C equations \eqref{c3ddd}, \eqref{c2ddd}, and \eqref{c1d}.  For $C3_{(\delta_1;\,\delta_2;\,\delta_3)}$, $C2_{(\delta_1;\,\delta_2;\,\delta_3)}$, and  $C1_{(\delta)}$, the $a(x;y;\alpha,\beta)$ are respectively given by $A3_{(2\delta_2)}$, $A2_{(\delta_1;\,\delta_2)}$, and $A2_{(\delta;\,0)}$.  Here $\overline{x}=x+\sqrt{x^2-1}$.}%
\label{table-BC2}
\end{table}

\section{Classical Yang-Baxter equation and CAFCC}\label{sec:YBE}

The original formulation of CAFCC was derived from new types of interaction-round-a-face (IRF) forms of the classical Yang-Baxter equation (CYBE), and explicit solutions of the latter were constructed from solutions of the classical star-triangle relations.  For the new formulation of CAFCC of \S\ref{sec:FCQECAFCC}\ref{sec:CAFCC}, there is a similar connection to a different form of the CYBE.

The relevant CYBE may be defined in terms of six complex-valued functions
\begin{align}\label{CSTRsol}
\lag_{\alpha}(x_i,x_j),\quad\lagh_{\alpha}(x_i,x_j),\quad\ol_{\alpha}(x_i,x_j),\quad\olh_{\alpha}(x_i,x_j),\quad\lam_{\alpha}(x_i,x_j),\quad\olam_{\alpha}(x_i,x_j),   
\end{align}
that each depend on two complex variables $x_i,x_j$ and a complex parameter $\alpha$.  These six functions are associated to edges and vertices shown in Figure \ref{fig:edges}, where the parameter $\alpha$ is represented by the difference of rapidity variables $u$ and $v$ associated to directed edges.  It is assumed that the partial derivatives also satisfy the following relations
\begin{align}\label{rels1}
\begin{split}
\frac{\partial D_{\alpha}(x_i,x_j)}{\partial x_i}=\frac{\partial D_{\alpha}(x_j,x_i)}{\partial x_i}(\textrm{mod } 2\pi\ii), \qquad
\frac{\partial D_{\alpha}(x_i,x_j)}{\partial x_i}=-\frac{\partial D_{-\alpha}(x_j,x_i)}{\partial x_i}(\textrm{mod } 2\pi\ii),
\end{split}
\end{align}
where $D$ represents one of $\lag,\ol,\lagh,\olh$ from \eqref{CSTRsol}, and
\begin{align}\label{rels2}
\begin{split}
\frac{\partial F_{\alpha}(x_i,x_j)}{\partial x_I}=\frac{\partial F_{2\eta+\alpha}(x_i,x_j)}{\partial x_I}(\textrm{mod } 2\pi\ii), \;
\frac{\partial G_{\alpha}(x_i,x_j)}{\partial x_I}=-\frac{\partial H_{\eta+\alpha}(x_i,x_j)}{\partial x_I}(\textrm{mod } 2\pi\ii),\;\; I\in\{i,j\},
\end{split}
\end{align}
where $\eta$ is some constant, $F$ is one of $\lag,\ol,\lagh,\olh,\lam,\olam$ from \eqref{CSTRsol}, and $(G,H)$ represents one of the three pairs $(\lag,\ol),(\lagh,\olh),(\lam,\olam)$.  The relations \eqref{rels1}--\eqref{rels2} reduce the number of different types of equations that need to be considered for CAFCC \cite{Kels:2020zjn}.

\begin{figure}[h!]
\centering
\includegraphics{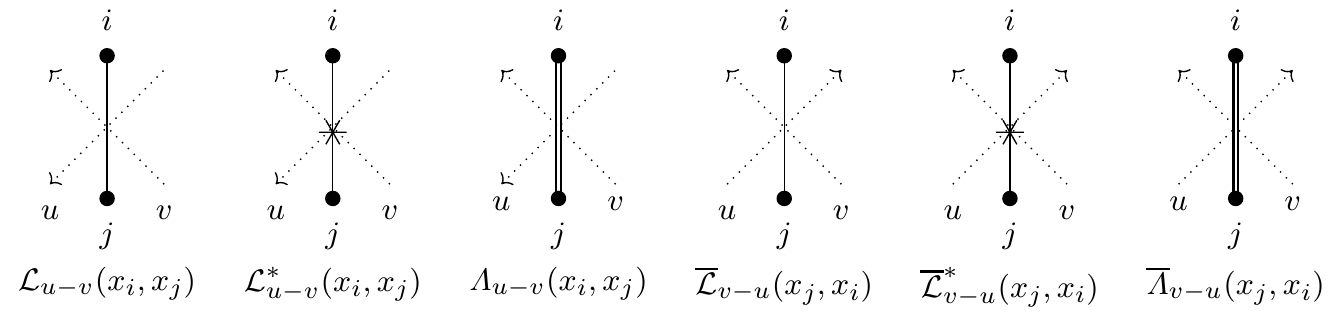}
\caption{Functions associated to edges for solutions of the classical Yang-Baxter equation.}
\label{fig:edges}
\end{figure}

Let $B_L$ and $B_R$ denote the graphs that appear on the left and right hand sides of the equality of Figure \ref{fig:YBE}, with respective sets of vertices $V(B_L)=\{a,b,c,d,e,f,h,i,j,k\}$ and $V(B_R)=\{a,b,c,d,e,f,l,m,n,o\}$, and sets of edges $E(B_L)$ and $E(B_R)$ which connect two vertices.  Then the CYBE is defined by
\begin{align}\label{YBEdef}
\sum_{(ij)\in E(B_L)}K_{\alpha_{ij}}(x_i,x_j)-\sum_{(ij)\in E(B_R)}K_{\alpha_{ij}}(x_i,x_j)=2\pi\ii\sum_{i\in V(B_L)\cup V(B_R)}k_ix_i+C(\alpha_{ij}),
\end{align}
for some integers $k_i\in\mathbb{Z}$, where $C(\alpha_{ij})$ is a constant with respect to the variables $x_i$ ($i\in V(B_L)\cup V(B_R)$), $K_{\alpha_{ij}}(x_i,x_j)$ is the function associated to an edge $(ij)$ according to the assignment of Figure \ref{fig:edges},  and the variables $x_i$ ($i\in V(B_L)\cup V(B_R)$) are subject to the following eight constraints
\begin{align}\label{YBEeqmo}
\begin{split}
\frac{\partial}{\partial x_J}\sum_{(ij)\in E(B_L)\cup E(B_R)}K_{\alpha_{ij}}(x_i,x_j)
=2\pi\ii k_J,\quad J\in\{h,i,j,k,l,m,n,o\}.
\end{split}
\end{align}

\begin{figure}[h!]
\centering
\includegraphics{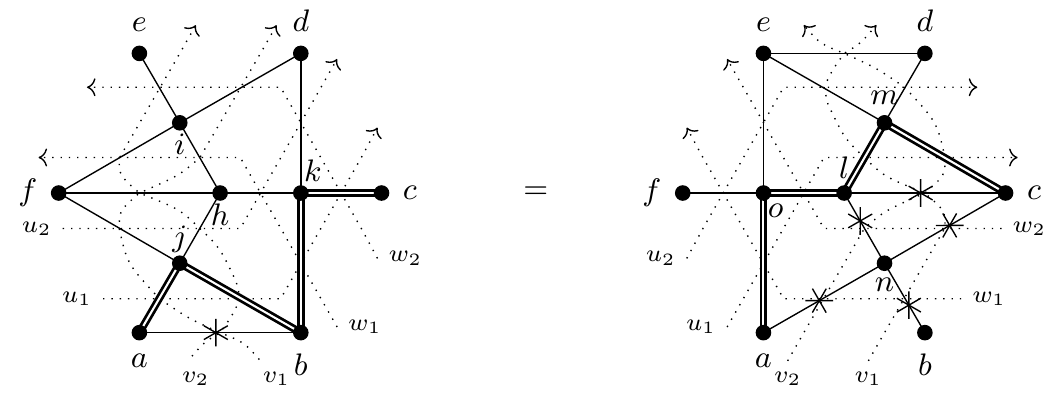}
\caption{Expression for the classical Yang-Baxter equation corresponding to \eqref{YBEdef}--\eqref{YBEeqmo}.}
\label{fig:YBE}
\end{figure}

The CAFCC property may be thought of as a reinterpretation of the classical IRF YBE of \eqref{YBEdef}.  First, there are a further six partial derivatives of the CYBE in addition to \eqref{YBEeqmo}, which are given by the equations
\begin{align}\label{YBEeqmo2}
\begin{split}
\frac{\partial}{\partial x_J}\Bigl(\sum_{(ij)\in E(B_L)}K_{\alpha_{ij}}(x_i,x_j)-\sum_{(ij)\in E(B_R)}K_{\alpha_{ij}}(x_i,x_j)\Bigr)
=2\pi\ii k_J,\quad J\in V(B_L)\cap V(B_R).
\end{split}
\end{align}

Introduce the following four functions defined in terms of \eqref{CSTRsol}
\begin{align}\label{43legs}
\begin{split}
 \At{x}{x_a}{x_b}{x_c}{x_d}{\al}{\bt}=\EXP^{\frac{\partial}{\partial y} \bigl(\lag_{v_1-u_2}(y,y_a)+\ol_{v_2-u_2}(y,y_b)+\ol_{v_1-u_1}(y_c,y)+\lag_{v_2-u_1}(y_d,y)\bigr)}, \\
\Aft{x}{x_a}{x_b}{x_c}{x_d}{\al}{\bt}=\EXP^{\frac{\partial}{\partial y} \bigl(\lag^{\ast}_{v_1-u_2}(y,y_a)+\ol^{\ast}_{v_2-u_2}(y,y_b)+\ol^{\ast}_{v_1-u_1}(y_c,y)+\lag^{\ast}_{v_2-u_1}(y_d,y)\bigr)}, \\
 \Ct{x}{x_a}{x_b}{x_c}{x_d}{\al}{\bt}=\EXP^{\frac{\partial}{\partial y} \bigl(\lag_{v_1-u_2}(y,y_a)+\ol_{v_2-u_2}(y,y_b)+\olam_{v_1-u_1}(y_c,y)+\lam_{v_2-u_1}(y_d,y)\bigr)}, \\
\Cft{x}{x_a}{x_b}{x_c}{x_d}{\al}{\bt}=\EXP^{\frac{\partial}{\partial y} \bigl(\lag^{\ast}_{v_1-u_2}(y_a,y)+\ol^{\ast}_{v_2-u_2}(y_b,y)+\olam_{v_1-u_1}(y,y_c)+\lam_{v_2-u_1}(y,y_d)\bigr)}.
\end{split}
\end{align}
The variables on the left and right hand sides of \eqref{43legs} for $a$ and $\overline{a}$ are related by
\begin{align}\label{cov1}
x=f(y),\quad x_i=f(y_i),\; i\in\{a,b,c,d\},
\end{align}
and the variables on the left and right hand sides for $c$ and $\overline{c}$ are related by
\begin{align}\label{cov2}
x=f(y),\quad 
x_i=\left\{\begin{array}{ll}
f(y_i), & i\in\{a,b\},\\
g(y_i), & i\in\{c,d\},
\end{array}\right.
\end{align}
where $f(y)$ and $g(y)$ are chosen so that each of $a,\overline{a},c,\overline{c}$ are ratios of multilinear polynomials in the four variables $x_a,x_b,x_c,x_d$.  The relation between parameters in \eqref{43legs} is
\begin{align}\label{cov3}
\alpha_1=h(u_1),\quad\alpha_2=h(u_2),\quad\beta_1=h(v_1),\quad\beta_2=h(v_2),
\end{align}
where $h(z)$ is chosen so that each of $a,\overline{a},c,\overline{c}$ have an algebraic dependence on $\alpha_1,\alpha_2,\beta_1,\beta_2$.  Because of the above choices of $f(y)$ and $g(y)$, the four equations
\begin{align}
\begin{gathered}
 \At{x}{x_a}{x_b}{x_c}{x_d}{\al}{\bt}=1,\qquad \Aft{x}{x_a}{x_b}{x_c}{x_d}{\al}{\bt}=1, \\
 \Ct{x}{x_a}{x_b}{x_c}{x_d}{\al}{\bt}=1,\qquad \Cft{x}{x_a}{x_b}{x_c}{x_d}{\al}{\bt}=1,
 \end{gathered}
\end{align}
can respectively be written in the equivalent forms
\begin{align}\label{4CAFCCfunctions}
\begin{gathered}
\A{x}{x_a}{x_b}{x_c}{x_d}{\al}{\bt}=0,\qquad \Af{x}{x_a}{x_b}{x_c}{x_d}{\al}{\bt}=0, \\
\C{x}{x_a}{x_b}{x_c}{x_d}{\al}{\bt}=0,\qquad \Cf{x}{x_a}{x_b}{x_c}{x_d}{\al}{\bt}=0,
\end{gathered}
\end{align}
where each of $A,\overline{A},C,\overline{C}$ are multilinear polynomials of the form \eqref{afflin}.  Then analogously to what was found for the CYBE for the original formulation of CAFCC \cite{Kels:2020zjn}, if the six functions \eqref{CSTRsol} are a solution of the CYBE \eqref{YBEdef}--\eqref{YBEeqmo}, the four equations \eqref{4CAFCCfunctions} will satisfy the property of CAFCC given in Section \ref{sec:CAFCC}.  This follows from the fact that each of the fourteen partial derivatives of the CYBE \eqref{YBEeqmo}--\eqref{YBEeqmo2} may be identified (mod $2\pi\ii$ and up to the change of variables \eqref{cov1}--\eqref{cov3}) with the fourteen equations \eqref{6face}--\eqref{8corner} of CAFCC, using the bijection between the vertices of Figure \ref{fig:YBE} and the top diagram of Figure \ref{fig:CAFCCcube}, given by
\begin{align}
(x,x_a,x_b,x_c,x_d)\mapsto(k,h,d,b,c), \;\;
(y,y_a,y_b,y_c,y_d)\mapsto(o,f,e,a,l), \;\;
(z_n,z_e,z_s,z_w)\mapsto(i,m,n,j).
\end{align}

\end{appendices}

\vskip6pt

\enlargethispage{20pt}


\dataccess{This article has no additional data.}


\competing{I declare I have no competing interests.}

\funding{No funding has been received for this article.}

\ack{The author would like to thank Giorgio Gubbiotti for fruitful discussions, and also the anonymous referees for several helpful suggestions that led to improvement of the manuscript.}



\end{document}